\newif\iflong
\newif\ifshort
\newtheorem{theorem}{Theorem}
\newtheorem{lemma}[theorem]{Lemma}
\newtheorem{corollary}[theorem]{Corollary}
\newcommand{\cC}{\mathcal{C}}
\newcommand{\cF}{\mathcal{F}}
\newcommand{\cS}{\mathcal{S}}
\newcommand{\bigoh}{\mathcal{O}}
\newcommand{\cQ}{\mathcal{Q}}
\newcommand{\NN}{\mathbb{N}}
\newcommand{\XP}{XP}
\newcommand{\QBF}{QBFSAT}
\newcommand{\dQBF}{$d$-QBFSAT}
\newcommand{\FEdQBF}[1]{$\forall\exists#1$-QBFSAT}
\newcommand{\FEQBF}{$\forall\exists\text{QBFSAT}$}
\DeclareMathOperator{\ar}{ar}
\DeclarePairedDelimiter\ceil{\lceil}{\rceil}
\newcommand{\pbDef}[3]{%
  \noindent
  \begin{center}
  \begin{boxedminipage}{0.49\textwidth}
  {\sc #1}\\[5pt]
  \begin{tabular}{l p{0.7\textwidth}}
  {\sc Instance}: & #2\\
  {\sc Question}: & #3
  \end{tabular}
  \end{boxedminipage}
  \end{center}
}
\newcommand{\CGIS}{\textsc{Clause-Graph Independent Set}}
\newcommand{\TAUT}{\textsc{Or-CNF Tautology}}
\title{Solving Quantified Boolean Formulas with Few Existential Variables}
\author{Leif Eriksson}
\email{leif.eriksson@liu.se}
\affiliation{%
  \institution{Department of Computer and Information Science, Linköping University}
  \city{Linköping}
  \country{Sweden}
}
\author{Victor Lagerkvist}
\email{victor.lagerkvist@liu.se}
\affiliation{%
  \institution{Department of Computer and Information Science, Linköping University}
  \city{Linköping}
  \country{Sweden}
}
\author{George Osipov}
\email{george.osipov@liu.se}
\affiliation{%
  \institution{Department of Computer and Information Science, Linköping University}
  \city{Linköping}
  \country{Sweden}
}
\author{Sebastian Ordyniak}
\email{sordyniak@gmail.com}
\affiliation{%
  \institution{School of Computing, University of Leeds}
  \city{Leeds}
  \country{UK}
}
\author{Fahad Panolan}
\email{fahad.panolan@gmail.com}
\affiliation{%
  \institution{School of Computing, University of Leeds}
  \city{Leeds}
  \country{UK}
}
\author{Mateusz Rychlicki}
\email{mkrychlicki@gmail.com}
\affiliation{%
  \institution{School of Computing, University of Leeds}
  \city{Leeds}
  \country{UK}
}
\begin{document}

\begin{abstract}
    The {\em quantified Boolean formula} (QBF) problem is an important
    decision problem generally viewed as the archetype for PSPACE-completeness.
  Many problems of central interest in AI are in general not included in NP, e.g.,  planning, model checking, and non-monotonic reasoning, and for such problems QBF has successfully been used as a modelling tool.
  However, solvers for QBF are not as advanced as state of the art SAT solvers, which has prevented QBF from becoming a universal modelling language for PSPACE-complete problems. A theoretical explanation is that QBF (as well as many other PSPACE-complete problems) lacks natural {\em parameters} guaranteeing {\em fixed-parameter tractability} (FPT). 
  
  In this paper we tackle this problem and consider a simple but overlooked parameter: the number of existentially quantified variables. 
  This natural parameter is virtually unexplored in the literature which one might find surprising given the general scarcity of FPT algorithms for QBF.
  Via this parameterization we then develop a novel FPT algorithm applicable to QBF instances in conjunctive normal form (CNF) of bounded clause length. We complement this by a W[1]-hardness result for QBF in CNF of unbounded clause length as well as sharper lower bounds for the bounded arity case under the (strong) {\em exponential-time hypothesis}. 
  \end{abstract}

\maketitle

\section{Introduction}

The {\em quantified Boolean formula} (QBF) problem is the decision problem of verifying a formula 
$Q_1 x_1 \ldots Q_n x_n . \varphi(x_1, \ldots, x_n)$, where $\varphi(x_1, \ldots, x_n)$ is a propositional formula and $Q_i \in \{\forall, \exists\}$ for each $i$.
Throughout, we write \QBF{} (respectively QBF-DNF) for the subproblem restricted to formulas in conjunctive normal form (respectively disjunctive normal form), and \dQBF{} for the problem where clauses have maximum size $d \geq 1$. From a theoretical angle, the QBF problem serves as a foundational example of PSPACE-completeness, and by restricting the quantifier alternations, we get examples of complete problems for any class in the polynomial hierarchy. From a more practical point of view, the field of applied QBF solving has developed on the shoulders of SAT solving, which has seen tremendous advances in the last decade~\cite{DBLP:journals/cacm/FichteBHS23}. Arguably, the raison d'etre behind SAT solving is not only to solve a specific NP-complete problem faster than exhaustive search, but to provide a combinatorial framework applicable to any problem which can be reduced to SAT. This naturally includes any problem in NP but if one considers stronger reductions than Karp reductions (e.g., by allowing a superpolynomial running time or by viewing the SAT solver as an oracle)  more problems fall under the umbrella of SAT. However, this approach is not always optimal, and for many problems of importance in artificial intelligence, e.g., planning, model checking, and non-monotonic reasoning~\cite{qbfsurvey}, this approach is not ideal since the best reductions in the literature incur an exponential overhead. These problems are instead more naturally formulated via QBF. However, this comes with the downside that QBF solvers, despite steady advances, are not nearly as advanced as their SAT brethren. For more information about applied QBF solving we refer the reader to the handbook by Biere et al.~\cite{biere2021handbook}. 

To bridge the gap between SAT and QBF solving we need algorithmic breakthroughs for the latter. In this paper we analyze QBF from a {\em theoretical} perspective and are therefore interested in obtaining unconditionally improved algorithms. 
To analyze the complexity of QBF we use the influential paradigm of
{\em parameterized complexity} where the goal is to identify
structural properties of instances, represented by natural numbers
called {\em parameters}, such that one can effectively solve instances
with bounded parameter size. More formally,  for every instance $I$
of a computational problem we associate a parameter $k \in
\mathbb{N}$ with it and  then we are primarily interested in algorithms with a
running time bounded by $f(k) \cdot \mathrm{poly}(|I|)$ for a
computable function $f \colon \mathbb{N} \to \mathbb{N}$ depends on $k$ and a
polynomial function $\mathrm{poly}$ depends on the length $|I|$ of the
input $I$. Such algorithms are said to be {\em fixed-parameter
  tractable} (FPT). Thus, while $f$ is generally going to be
superpolynomial, an FPT algorithm may still be very competitive in
practice if the parameter is sufficiently small. For problems in NP
there exists a wealth of results~\cite{parameterizedalgorithmsbook},
but for PSPACE-complete problems the landscape is rather scarce in
comparison since there are fewer natural parameters to choose
from. For example, the go-to parameter for NP problems is {\em
  tree-width}, which measures how close a graph is to being a tree,
and which is typically sufficient to produce an FPT algorithm. But
this fails for QBF where it is even known that QBF is PSPACE-complete
for constant primal tree-width~\cite{ATSERIAS20141415}.
Here, the situation
becomes more manageable if one simultaneously bounds e.g.\ the number
of quantifier alternations~\cite{chen2004} but for the general QBF
problem the few FPT results that exists are primarily with respect to
more exotic parameters such as prefix pathwidth~\cite{EIBEN20201} and
respectful treewidth~\cite{ATSERIAS20141415} which takes the ordering
of the quantifiers into account. Two interesting counter examples are
(1) the FPT algorithm parameterized by primal vertex cover number 
by~\cite{EIBEN20201}, further optimized and simplified 
by~\cite{lampis_et_al:LIPIcs.IPEC.2017.26}, and (2) the
backdoor approach in~\cite{Samer2009} which
generalizes the classical tractable fragments of QBF in an FPT
setting. Moreover, very recently, tractability has been shown for two
parameters that fall between vertex cover number
and treewidth~\cite{DBLP:conf/lics/FichteGHSO23}.
Thus, while FPT results for QBF and related problems exist, they are comparably few in number and generally defined with respect to more complicated structural properties. Do simple parameters not exist, or have we been investigating the wrong ones? 

In this paper we demonstrate that a natural (and previously overlooked) parameter {\em does} exist: the number of existentially quantified variables. Thus, we bound the number of existentially quantified variables but otherwise make no restrictions on the prefix. While quantifier elimination techniques have been an important tool ever since the early days of QBF solving~\cite{ayari2002qubos,biere2004resolve} (see also Janota \& Marques-Silva~\cite{JANOTA201525} for a more recent discussion and a comparison to $Q$-resolution) the predominant focus has been to expand universal quantifiers since the removal of universal quantifiers produces instances that can be solved with classical SAT techniques (however, methods for expanding existential quantifiers have also been tried in practice~\cite{expansionexample}).
This is, for example, made explicit by  
Szeider \& de Haan~\cite{10.1007/978-3-319-09284-3_8} who prove that QBF-DNF parameterized by the number of universally quantified variables is FPT-reducible to SAT (parameterized by the number of all variables). However, they only prove para-NP-completeness of the problem which in the world of parameterized complexity is a far cry away from FPT. Conversely, we show that concentrating on existential variable elimination is much more lucrative since  in this case one can construct an FPT algorithm once all existential variables have been removed. Let us also remark that if a \QBF{} instance contains $k$ existential variables and $n - k$ universal variables, then the combination $n$ is not a relevant parameter: it makes the problem technically FPT but there are very few applications where one would expect the total number of variables to be bounded. Moreover, while \QBF{} for arbitrary but constant quantifier depth admits a moderately improved algorithm with a running time of the form $2^{n - n^{\Omega(1)}}$~\cite{DBLP:journals/eccc/SanthanamW13}, not even 3-QBFSAT restricted to universal followed by existential quantifiers admits an exponentially improved $2^{\varepsilon n} \cdot \mathrm{poly}(|\phi|)$ algorithm for $\varepsilon < 1$ under the so-called {\em strong exponential-time hypothesis}~\cite{CalabroIP13}.

After having defined the basic notions (Section~\ref{sec:preliminaries}), we obtain our major results in Section~\ref{sec:algorithms} as follows. First, given an instance $Q_1 x_1 \ldots \exists x_i \ldots Q_n x_n \phi(x_1, \ldots, x_n)$ we can remove the existential quantifier for $x_i$ by creating a disjunction of the two subinstances (with fresh variables) obtained by fixing $x_i$ to 0 or 1.
This extends to a quantifier elimination preprocessing scheme which reduces to the problem of checking whether a disjunction of $k$ formulas in $d$-CNF is a tautology or not. We call this problem \TAUT{} and then construct an FPT algorithm by converting it to the problem of finding an independent set in a certain well-structured graph (\CGIS{}). The latter problem has a strong combinatorial flair and we manage to construct a kernel with at most $kd!((k-1)d+1)^d$ vertices via the {\em sunflower lemma} of Erd{\"o}s \& Rado~\cite{Erdos60}. Put together, this results in an FPT algorithm with a running time of $2^{\bigoh(k2^k)} \cdot |\phi|$ for \QBF{} with constant clause size when parameterized by the number of existentially quantified variables.
At a first glance, this might not look terribly impressive compared to the naive $2^{n} \cdot |\phi|$ algorithm obtained by branching on $n - k$ universal and $k$ existential quantifiers, but we stress that the FPT algorithm is suitable for applications where the total number of existentially quantified variables is kept relatively small. Under this constraint our algorithm shows that one can effectively ignore the cost of universally quantified variables and solve the instance in polynomial time. Via an FPT reduction we also demonstrate (in Section~\ref{sec:qcsp}) that our FPT result straightforwardly can be extended to the more general problem {\em quantified constraint satisfaction} where variables take values in arbitrary finite domains. 

In Section~\ref{sec:lower_bounds} we show that the clause size dependency in our FPT algorithm is necessary under the conjecture that FPT $\neq$ W[1], which is widely believed conjecture in parameterized complexity. 
Specifically we show that \QBF{}, when parameterized by the number of existentially quantified variables (but not the arity), is W[1]-hard by reducing from the \textsc{Multipartite Independent Set} problem. 
Moreover, under the Exponential Time Hypothesis (ETH), the same reduction rules out algorithms
with running time $f(k) \cdot 2^{o(2^k)}$ for every computable
function $f : \NN \to \NN$.
We then proceed to establish a sharper lower bound under the (Strong) ETH for the specific case of clause size 3.
First, we have an easy $2^{o(k)}$ lower bound for 3-QBFSAT (under the exponential-time hypothesis) since 3-SAT can be viewed as a special case of 3-QBFSAT. However, under the strong exponential-time hypothesis we prove a markedly stronger bound: there is {\em no} constant $c$ such that 3-QBFSAT (even for only two quantifier blocks) is solvable in $c^k$ time, i.e. the problem is not solvable in $2^{\bigoh(k)}$ time. Thus, while our $2^{\bigoh(k2^k)}$ algorithm likely can be improved to some extent, we should not hope to obtain a $2^{\bigoh(k)}$ algorithm. This proof is based on an interesting observation: any instance with comparably few number of universally quantified variables can be solved by enumerating all possible assignments to the universal variables and then solving the remaining part with a 3-SAT algorithm. Thus, as also remarked by Calabro et al.~\cite{CalabroIP13}, instances with few existential variables are in a certain sense harder, which makes our FPT algorithm in Section~\ref{sec:algorithms} all the more surprising.

We close the paper with a discussion in Section~\ref{sec:discussion}. Most importantly, our FPT algorithm shows tractability of new classes of previously hard QBFs, and it would be interesting to investigate whether similar parameters could be used to study other hard problems outside NP. A promising candidate is the NEXPTIME-complete problem obtained by extending Boolean formulas with {\em Henkin quantifiers}, resulting in the {\em depedency QBF} (DQBF) formalism.

\section{Preliminaries} \label{sec:preliminaries}

In this section we briefly cover 
the necessary background
on parameterized complexity
and quantified Boolean formulas.
We assume familiarity with the basics of 
graph theory, cf.~\cite{DiestelBook}.
We use the notation $[n]$ for the set $\{1,\dots,n\}$ for every $n \in \NN$.

\paragraph{Computational Complexity.}

We follow~\cite{downey2013fundamentals}~and~\cite{fomin2019kernelization} 
in our presentation.
Let $\Sigma$ be a finite alphabet.
A parameterized problem $L$ is a subset of $\Sigma^* \times \NN$.
The problem $L$ is \emph{fixed-parameter tractable} (or, in FPT)
if there is an algorithm deciding 
whether an instance $(I, k) \in \Sigma^* \times \NN$ is in $L$
in time $f(k) \cdot |I|^c$, where
$f$ is some computable function and
$c$ is a constant independent of $(I, k)$.

Let $L, L' \subseteq \Sigma^* \times \NN$ be two parameterized problems.
A mapping $P : \Sigma^* \times \NN \to \Sigma^* \times \NN$ is a 
\emph{fixed-parameter (FPT) reduction from $L$ to $L'$}
if there exist computable functions $f,p : \NN \to \NN$
and a constant $c$ such that the following conditions hold:
\begin{itemize}
  \item $(I,k) \in L$ if and only if $P(I,k) = (I',k') \in L'$,
  \item $k' \leq p(k)$, and
  \item $P(I,k)$ can be computed in $f(k) \cdot |I|^c$ time.
\end{itemize}
Let $L \subseteq \Sigma^* \times \NN$ be a parameterized problem.
A \emph{kernelization (algorithm)} for $L$
is an algorithm that takes $(I, k) \in \Sigma^* \times \NN$ as input and
in time polynomial in $|(I,k)|$, 
outputs $(I',k') \in \Sigma^* \times \NN$ such that: \ifshort $(I,k)
  \in L$ if and only if $(I',k')\in L$, and $|I'|,k' \leq h(k)$ for
  some computable function $h$. \fi
\iflong
\begin{itemize}
  \item $(I,k) \in L$ if and only if $(I',k')\in L$, and 
  \item $|I'|,k' \leq h(k)$ for some computable function $h$.
\end{itemize}
\fi
The output $(I',k')$ of the kernelization algorithm
is called a \emph{kernel}.
Clearly, if $L$ is decidable and admits a kernel, 
$L$ is in FPT, and the converse also 
holds~(see e.g.~\cite[Theorem~4]{fomin2019kernelization}).

The class $W[1]$ contains all problems that 
admit FPT reductions from \textsc{Independent Set}
parameterized by the solution size, 
i.e. the number of vertices in the independent set.
Under the standard assumption that $\text{FPT} \neq \text{W}[1]$,
we can show that a problem is not fixed-parameter tractable
by proving its W[1]-hardness, i.e.
by providing an FPT reduction from \textsc{Independent Set} 
to the problem. The class \XP{} contains all parameterized problems
that can be solved in time $n^{f(k)}$ for instances with input size
$n$, parameter $k$ and some computable function $f$.

For sharper lower bounds, stronger assumptions are sometimes necessary. Here, we will primarily consider the {\em exponential-time hypothesis} which states that the 3-SAT problem is not solvable in {\em subexponential time} when parameterized by the number of variables $n$ or the number of clauses $m$. To make this more precise, for $d \geq 3$ let $c_d$ denote the infimum of all constants $c$ such that $d$-SAT is solvable in $2^{c n}$ time by a deterministic algorithm; then the ETH states that $c_3 > 0$. The {\em strong exponential-time hypothesis} (SETH) additionally conjectures that the limit of the sequence $c_3, c_4, \ldots$ tends to $1$, which in particular is known to imply that the satisfiability problem for clauses of arbitrary length (CNF-SAT)  is not solvable in $2^{c n}$  time for {\em any} $c < 1$~\cite{impagliazzo2009}.

\paragraph{Quantified Boolean Formulas.}

Boolean expressions, formulas, variables, 
literals, conjunctive normal form (CNF) and clauses are
defined in the standard way~(cf.~\cite{biere2021handbook}).
We treat $1$ and $0$ as the truth-values
``true'' and ``false'', respectively, 
and clauses in CNF as sets of literals.
We will assume that
no clause contains a literal twice or a literal and its negation, and no clause is repeated in any formula. 
A \emph{quantified Boolean formula} 
is of the form $\cQ. \phi$, where 
$\cQ = Q_1 x_1 \dots Q_n x_n$ with
$Q_i \in \{\forall, \exists\}$ for all $1 \leq i \leq n$
is the \emph{(quantifier) prefix},
$x_1, \dots, x_n$ are variables, and
$\phi$ is a propositional Boolean formula $\phi$
on these variables
called the \emph{matrix}.
If $Q_i = \forall$, we say that
$x_i$ is a \emph{universal variable},
and if $Q_i = \exists$, we say that
$x_i$ is an \emph{existential variable}.

For a Boolean formula $\phi$, a variable $x$
and value $b \in \{0,1\}$,
define $\phi[x = b]$
to be the formulas obtained from $\phi$
by replacing every occurrence of $x$ with $b$.
The truth value of a formula $\cQ. \phi$ is 
defined recursively.
Let $\cQ = Q_1 x_1 \dots Q_n x_n$ and
$\cQ' = Q_2 x_2 \dots Q_n x_n$.
Then $\cQ. \phi$ is \emph{true} if
\begin{itemize}
  \item $n = 0$ and $\phi$ is a true Boolean expression, or
  \item $Q_1 = \exists$ and 
  $\cQ'. \phi[x_1 = 0]$ \emph{or} $\cQ'. \phi[x_1 = 1]$ is true, or
  \item $Q_1 = \forall$ and 
  $\cQ'. \phi[x_1 = 0]$ \emph{and} $\cQ'. \phi[x_1 = 1]$ are true.
\end{itemize}
Otherwise, $\cQ. \phi$ is \emph{false}.

A formula $\cQ. \phi$ is a QCNF if $\phi$ is in CNF,
and a Q$d$-CNF if $\phi$ is in $d$-CNF, i.e. a CNF where every clause
has size at most $d$.
For a clause $C$ in a QCNF, we use
$C^\exists$ and $C^\forall$ to denote
the restriction of $C$ to existential
and universal variables, respectively.
If $\phi$ is in CNF, then $\phi[x=0]$
can be simplified by removing every clause
containing literal $\overline{x}$, and 
removing literal $x$ from the remaining clauses.
Analogously, $\phi[x=1]$ is simplified by
removing every clause containing 
the literal $x$, and removing the literal 
$\overline{x}$ from the remaining clauses.

Let $\cQ. \phi$ be a formula such that
$Q_1 = \dots = Q_i = \forall$ and
$Q_{i+1} = \dots = Q_n = \exists$
for some $1 \leq i \leq n$.
Then we write that $\cQ. \phi$ is
a $\forall \exists$BF,
and replace BF by CNF and $d$-CNF
if $\phi$ is in CNF and $d$-CNF,
respectively.

Following the convention in the literature,
we write \QBF{} for the problem of deciding whether a QCNF is true, and by \dQBF{}, \FEQBF{} and \FEdQBF{d} we denote the same problem
restricted to Q$d$-CNF, $\forall \exists$CNF and $\forall \exists d$-CNF formulas, respectively.

\section{Algorithms for \QBF{}} \label{sec:algorithms}

We show that \QBF{} parameterized by the
number of existential variables and the maximum size of any clause is linear-time fixed-parameter tractable, i.e., with a running time linear in the size of formula.
The algorithm is based on a chain of reductions involving two novel natural problems and for which we provide corresponding fixed-parameter algorithms along the way.
We begin by defining the \TAUT{} problem.

\pbDef{\TAUT{}}
{A set of variables $X$ and $d$-CNF formulas 
$\phi_1, \dots, \phi_k$ on $X$.}
{Is $\phi_{1} \lor \dots \lor \phi_{k}$ a tautology?}

The following lemma follows via an application of quantifier elimination.
\ifshort
\begin{lemma}[$\star$]
\fi
\iflong
\begin{lemma}
\fi
\label{lem:qbftocgis}
  There is an algorithm that takes a Q$d$-CNF
  $\cQ. \phi$ with $k$ existentially quantified variables,
  and in time $\bigoh(2^k |\phi|)$
  constructs an instance $I$ of \TAUT{}
  with $2^k$ $d$-CNF formulas of size at most $|\phi|$
  such that $\cQ. \phi$ holds if and only if
  $I$ is a yes-instance.
\end{lemma}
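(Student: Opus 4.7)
The plan is to iteratively eliminate existential quantifiers using the equivalence $\exists x . \psi \equiv \psi[x=0] \lor \psi[x=1]$, which doubles the formula each time it is applied. The subtlety is that when an existential $\exists x$ has universal quantifiers nested inside its scope, pushing the disjunction outside forces us to duplicate (by renaming) every such universal variable. Concretely, I would define a recursive procedure $\mathrm{Elim}(\cQ, \phi)$: if $\cQ$ is empty, return $\{\phi\}$; if $\cQ = \forall x . \cQ'$, return $\mathrm{Elim}(\cQ', \phi)$; if $\cQ = \exists x . \cQ'$, form $(\cQ'_0, \phi_0)$ and $(\cQ'_1, \phi_1)$ by renaming all universal variables in $\cQ'$ to two disjoint fresh sets and substituting $x=0$ (respectively $x=1$) in $\phi$, then return $\mathrm{Elim}(\cQ'_0, \phi_0) \cup \mathrm{Elim}(\cQ'_1, \phi_1)$. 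The output \TAUT{} instance takes $X$ to be the union of all (original and renamed) universal variables appearing in the returned formulas.

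Correctness is then proved by induction on the length of $\cQ$, with the base case being trivial. For $\cQ = \forall x . \cQ'$, the disjunction of $\mathrm{Elim}(\cQ, \phi)$ is a tautology over its variable set if and only if it is a tautology for both $x=0$ and $x=1$, which by the inductive hypothesis is equivalent to both $\cQ'.\phi[x=0]$ and $\cQ'.\phi[x=1]$ being true, i.e., to $\cQ.\phi$ being true. For $\cQ = \exists x . \cQ'$, I would rely on two equivalences: $\exists x . \psi \equiv \psi[x=0] \lor \psi[x=1]$, and $\forall \vec{Y}_0 . A(\vec{Y}_0) \lor \forall \vec{Y}_1 . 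B(\vec{Y}_1) \equiv \forall \vec{Y}_0 \forall \vec{Y}_1 . (A(\vec{Y}_0) \lor B(\vec{Y}_1))$ whenever $\vec{Y}_0 \cap \vec{Y}_1 = \emptyset$. The second equivalence is exactly what the renaming step makes sound: without disjoint variable names, the two universally quantified disjuncts cannot be merged into a single prenex disjunction.

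For the resource bounds, each invocation at an existential doubles the number of returned formulas, giving $2^k$ formulas in total. Each formula is obtained from $\phi$ by substituting a $0/1$ value for every existential variable and renaming some universal variables, which preserves the $d$-CNF shape (substitution only removes literals or clauses, and renaming preserves clause sizes) and yields a formula of size at most $|\phi|$. A straightforward implementation produces each copy in $O(|\phi|)$ time, for a total of $O(2^k |\phi|)$. The delicate point I expect to be the main obstacle is getting the variable-renaming bookkeeping right so that the $\forall$-over-disjunction equivalence applies cleanly at every existential step; once the recursion is set up and the two equivalences above are justified, the rest is routine.
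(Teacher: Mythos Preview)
Your proposal is correct and rests on the same key idea as the paper: eliminate each existential via $\exists x.\psi \equiv \psi[x{=}0]\lor\psi[x{=}1]$, then give the two disjuncts disjoint fresh copies of the universals in its scope so that the $\forall$-over-$\lor$ equivalence lets the disjunction be absorbed into a single (ultimately purely universal) prefix. The one organisational difference is the direction of elimination. The paper always removes the \emph{innermost} existential, so the suffix being duplicated is purely universal and the equivalence $(\forall\vec Y.\,A)\lor(\forall\vec Y.\,B)\equiv\forall\vec Y_0\vec Y_1.\,(A_0\lor B_1)$ applies directly at each step to a set $D$ of formulas sharing one prefix; you instead recurse from the outermost quantifier, so when you hit $\exists x.\cQ'$ the suffix $\cQ'$ may still contain existentials and you must invoke the inductive hypothesis on the two recursive calls before the disjoint-variable equivalence applies. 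Both orders yield the same $2^k$ leaf formulas and the same $\bigoh(2^k|\phi|)$ bound; the paper's order gives a cleaner single-step invariant ($\cQ.\bigvee_{\phi\in D}\phi$ is preserved), while yours avoids carrying the set $D$ explicitly.

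One small point worth tightening: in both your $\forall$ case and when merging the two branches at $\exists$, you are implicitly using that the claimed equivalence holds \emph{pointwise} for every assignment to the outer universal variables already peeled off (equivalently, that $\mathrm{Elim}$ commutes with substituting values for free variables). Stating this strengthened invariant explicitly is what makes the induction go through cleanly.
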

\iflong
\begin{proof}
  We will define a procedure
  that takes as input 
  a quantifier prefix $\cQ$ on variables $X$ 
  and a set of $d$-CNF formulas $D$ on $X$,
  and in time $\bigoh(|D| \cdot |\phi|)$
  computes a new quantifier prefix $\cQ'$
  on a new set of variables variables $X'$
  and a new set of $d$-CNF formulas $D'$ on $X'$
  such that
  \begin{enumerate}[(1)]
    \item \label{cond:numvars} 
    $|X'| \leq 2|X|$,
    \item \label{cond:numforms} 
    $|D'| \leq 2|D|$,
    \item \label{cond:formsize} 
    $|\phi'| \leq |\phi|$ for all $\phi' \in D'$ and $\phi \in D$,
    \item \label{cond:existless} 
    there is one less existential quantifier in $Q'$
    than in $Q$, and
    \item \label{cond:correct} 
    the formula $\cQ. \bigvee_{\phi \in D} \phi$
    is equivalent to
    $\cQ'. \bigvee_{\phi' \in D'} \phi'$.
  \end{enumerate}
  
  If the initial Q$d$-CNF is $\cQ. \phi$,
  we start with $\cQ$ and $D = \{\phi\}$,
  and recursively apply the procedure $k$ times
  until we obtain an equivalent formula
  $\cQ^\star. \bigvee_{\phi^\star \in D^\star} \phi^\star$, 
  where $\cQ^\star$ contains only universal quantifiers,
  the number of variables
  $|X^\star| \leq 2^k|X|$, and the number of formulas
  $|D^\star| \leq 2^k|D| \leq 2^k$.
  The total running time sums up to $\bigoh(2^k |\phi|)$.
  Since all quantifiers in $\cQ^\star$ are universal,
  $\cQ. \phi$ holds if and only if
  $\bigvee_{\phi^\star \in D^\star} \phi^\star$ is a tautology,
  i.e. $(X^\star, D^\star)$ is a yes-instance of \TAUT{}.

  Now we define the procedure.
  Let $\cQ = Q_1 x_1 \dots Q_n x_n$ 
  be the quantifier prefix,
  and $D$ be a set of $d$-CNF formulas.
  Let $i$ be the index of the last existential quantifier in $\cQ$,
  and observe that $Q_j = \forall$ for all $j > i$.
  Create a new quantifier prefix consisting of three parts
  $\cQ' = \cQ_{<i} \cQ_{0} \cQ_{1}$,
  where 
  $\cQ_{<i} = Q_1 x_1 \dots Q_{i-1} x_{i-1}$
  is a copy of $\cQ$ up to index $i$, and
  $\cQ_{b} = \forall x^{b}_{i+1} \dots \forall x^{b}_{n}$
  for both $b = 0$ and $b = 1$.
  For a formula $\phi$ with variables $x_1,\dots,x_n$ and 
  $b \in \{0,1\}$,
  let $\phi^{i,b}$ denote
  the formula obtained from $\phi$
  by replacing every variable $x_j$ with $j > i$ by $x^b_j$.
  For every $\phi \in D$,
  add $\phi[x_i=0]^{i,0}$ and $\phi[x_i=1]^{i,1}$ to $D'$.

  Clearly, the procedure can be implemented in $\bigoh(|D'|)$ time,
  and the resulting $\cQ'$ and $D'$ satisfy conditions
  \eqref{cond:numvars}, \eqref{cond:numforms}, 
  \eqref{cond:formsize} and \eqref{cond:existless}.
  It remains to show that~\eqref{cond:correct} also holds, 
  i.e. that the new formula is equivalent to the original one.
  To this end, let $\phi_D = \bigvee_{\phi \in D} \phi$ 
  and observe that
  \begin{align*}
    &\cQ.     &&\bigvee_{\phi \in D} \phi                                    &&\iff \\
    &\cQ_{<i} &&\exists x_i \forall x_{i+1} \dots x_{n}. \phi_D              &&\iff \\
    &\cQ_{<i}.&&(\forall x_{i+1} \dots x_{n}. \phi_D[x_i=0]) \lor \\
    &         &&(\forall x_{i+1} \dots x_{n}. \phi_D[x_i=1])                 &&\iff \\
    &\cQ_{<i}.&&\forall x^{0}_{i+1} \dots x^{0}_{n}. \phi_D[x_i=0]^{i,0} \lor \\
    &         && \forall x^{1}_{i+1} \dots x^{1}_{n}. \phi_D[x_i=1]^{i,1}    &&\iff \\
    &\cQ_{<i} &&\forall x^{0}_{i+1} \dots x^{0}_{n} \forall x^{1}_{i+1} \dots x^{1}_{n}. \\
    &         &&\phi_D^{i,0}[x_i=0] \lor \phi_D^{i,1}[x_i=1]                 &&\iff \\
    &\cQ'.    &&\bigvee_{\phi' \in D'} \phi',  \\
  \end{align*}  
  where the last equivalence follows from the expansion
  $\phi_D^{i,b}[x_i = b] = \bigvee_{\phi \in D} \phi^{i,b}[x_i = b]$.
\end{proof}
\fi

As the next step, we provide a polynomial-time reduction from
\TAUT{} to the problem of finding an independent set of size $k$
in a well-structured graph.
For a set of variables $X$ and integer $d$,
let $\cC_X^d$ be the set of all
clauses with exactly $d$ distinct literals 
over variables in $X$.
A pair $(G,\lambda)$ is a
\emph{$k$-partite $d$-clause graph over $X$}
if $G$ is an undirected $k$-partite graph with $V(G) = V_1 \uplus \cdots \uplus V_k$,
and $\lambda : V(G) \to \cC_X^d$ is a function, injective on $V_i$ for every $i \in [k]$.
Moreover, two vertices $u \in V_i$ and $v \in V_j$ in $G$
are connected by an edge if and only if
$i \neq j$ and $\lambda(u)$ and $\lambda(v)$ {\em clash},
i.e. they contain a pair of opposite literals.
The problem of finding an independent set in such 
a graph can be formally defined as follows. 
\pbDef{\CGIS{}}
{A $k$-partite $d$-clause graph $(G,\lambda)$ over $X$.}
{Is there an independent set $S \subseteq V(G)$ such that
$|S \cap V_i| = 1$ for all $1 \leq i \leq k$?}
In the following we will use $\lambda^{-1}$ as the inverse of
$\lambda$, which is well-defined if the part $V_i$ is clear from the
context since $\lambda$ is injective on every part $V_i$.
We also use $\lambda(S)$ to denote the set
$\{ \lambda(v) : v \in S \}$ for every set $S \subseteq V$.

\begin{lemma} \label{lem:tauttocgis}
  There is a linear-time reduction
  that takes an instance
  $I = (X, \{\phi_1, \dots, \phi_k\})$ of \TAUT{}, where each $\phi_i$ is a $d$-CNF, 
  and produces an instance 
  $I' = (G, X, \lambda)$ of \CGIS{} 
  where $(G,\lambda)$ is a $k$-partite $d$-clause graph over $X$ with the $i$-th part having at most $|\phi_i|$ vertices such that
  $I$ is a no-instance if and only if $I'$ is a yes-instance.
\end{lemma}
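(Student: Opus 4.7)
The plan is to establish a direct correspondence between falsifying assignments of $\phi_{1} \lor \dots \lor \phi_{k}$ and choices of one clause per $\phi_i$ that are pairwise non-clashing, and then to interpret such choices as independent sets with one vertex per part in a suitable $k$-partite $d$-clause graph.

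First I would address a small mismatch in definitions: $d$-CNF allows clauses of size up to $d$, while $\cC_X^d$ contains only clauses of exactly $d$ literals. I would fix this by padding: for every clause $C$ in some $\phi_i$ of length $\ell < d$, introduce $d-\ell$ fresh variables (unique to that clause) and append them as positive literals. A short argument shows this preserves the tautologicity of the disjunction: padding only weakens clauses, so the padded disjunction is implied by the original, and, conversely, setting all fresh variables to $0$ in any candidate counterexample to the padded disjunction recovers a counterexample to the original. After this step, every clause has exactly $d$ literals, and the size of each $\phi_i$ grows by only a constant factor depending on $d$.

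Next I would construct the $k$-partite $d$-clause graph $(G, \lambda)$: let $V_i$ contain one vertex per clause of the padded $\phi_i$, let $\lambda$ map each vertex to the corresponding clause, and add an edge between $u \in V_i$ and $v \in V_j$ whenever $i \neq j$ and $\lambda(u)$, $\lambda(v)$ clash. Since clauses in $\phi_i$ are distinct, $\lambda$ is injective on each $V_i$ and $|V_i| \leq |\phi_i|$, matching the bound required by the lemma.

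The core of the argument is the equivalence, which I would prove in two directions. A clause $C$ is falsified by an assignment $\alpha$ iff $\alpha$ sets every literal of $C$ to false, and two clauses can be simultaneously falsified iff they do not clash, because a clash on a variable $x$ would force $x$ to take both truth values. Hence $\phi_1 \lor \dots \lor \phi_k$ is \emph{not} a tautology iff there exist clauses $C_1 \in \phi_1, \dots, C_k \in \phi_k$ with no two clashing, which is precisely an independent set in $G$ containing exactly one vertex of each $V_i$. This yields that $I$ is a no-instance of \TAUT{} iff $I'$ is a yes-instance of \CGIS{}, as claimed. The only real obstacle is the bookkeeping around padding: one has to verify that fresh padding literals do not introduce spurious edges (they cannot, since each appears in only one clause) and that equivalence of the tautology question survives; once this is done, the correspondence between non-clashing clause selections and falsifying assignments is immediate, and linear-time construction is clear.
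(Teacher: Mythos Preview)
Your proposal is correct and follows essentially the same approach as the paper: build one vertex per clause, let edges encode clashes, and argue that a falsifying assignment corresponds exactly to a non-clashing transversal of the parts. The only difference is cosmetic: the paper pads all short clauses using a single pool of $d-1$ fresh positive variables (noting this leaves the edge set unchanged), whereas you introduce fresh variables per clause; both choices work, and your key observation that padding cannot create new clashes is exactly what the paper uses as well.
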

\begin{proof}
  Given an instance $I$ of \TAUT{}, 
  construct the $k$-partite $d$-clause 
  graph $(G,\lambda)$ over $X$ by letting 
  $V(G) = V_1 \uplus \cdots \uplus V_k$,
  where $V_i$ contains one vertex $v$
  for every clause $C$ in $\phi_i$,
  and setting $\lambda(v) = C$; by the assumption
  that all clauses in $\phi_i$ are distinct,
  this definition ensures that
  $\lambda$ is injective on every part $V_i$.
  For every $1 \leq i < j \leq n$
  and every $u \in V_i$ and $v \in V_j$,
  add an edge $\{u, v\}$ to $G$ if
  clauses $\lambda(u)$ and $\lambda(v)$ clash.
  To ensure that every
  clause $\lambda(v)$ contains exactly $d$ literals, we add $d-1$ new
  variables to $X$ and add $d-|\lambda(v)|$ of those positively to the
  clause $\lambda(v)$. Note that this does not modify the edge set of the graph because all new variables only occur positively.  The ideas behind the reduction is illustrated
  in Figure~\ref{fig:tautgraph}.
  Clearly, this reduction requires polynomial time.

  \begin{figure}
    \begin{center}
      \begin{tikzpicture}[yscale=0.65, inner sep=1pt]
        \draw (0,0) ellipse (0.7cm and 3cm);
        \node[shape=circle,draw=black] (U1) at (0,2) {$u_{1}$};
        \node[shape=circle,draw=black] (U2) at (0,0.75) {$u_{2}$};
        \node[shape=circle,draw=black,fill=gray!40] (U3) at (0,-0.75) {$u_{3}$};
        \node[shape=circle,draw=black] (U4) at (0,-2) {$u_{4}$};
        \draw (2,0) ellipse (0.7cm and 3cm);
        \node[shape=circle,draw=black] (V1) at (2,2) {$v_{1}$};
        \node[shape=circle,draw=black] (V2) at (2,0) {$v_{2}$};
        \node[shape=circle,draw=black,fill=gray!40] (V3) at (2,-2) {$v_{3}$};
        \draw (4,0) ellipse (0.7cm and 3cm);
        \node[shape=circle,draw=black,fill=gray!40] (W1) at (4,2) {$w_{1}$};
        \node[shape=circle,draw=black] (W2) at (4,0) {$w_{2}$};
        \node[shape=circle,draw=black] (W3) at (4,-2) {$w_{3}$};
        \draw (6,0) ellipse (0.7cm and 3cm);
        \node[shape=circle,draw=black] (Z1) at (6,2) {$z_{1}$};
        \node[shape=circle,draw=black] (Z2) at (6,0.75) {$z_{2}$};
        \node[shape=circle,draw=black,fill=gray!40] (Z3) at (6,-0.75) {$z_{3}$};
        \node[shape=circle,draw=black] (Z) at (6,-2) {$z_{4}$};
        \draw (U1)--(V1);
        \draw (U1)--(V3);
        \draw (U1) to[out=20,in=160] (W1);
        \draw (U1)--(W2);
        \draw (U1) to[out=25,in=160] (Z1);
        \draw (U1)--(Z2);
      \end{tikzpicture}
    \end{center}
    \caption{Let $X=\{x_1,x_2,\ldots,x_6\}$ be a set of variables. Let $\phi_1=(x_1\vee x_2 \vee x_3)\wedge (\overline{x}_1\vee x_2 \vee x_4)\wedge (\overline{x}_1\vee x_5 \vee \overline{x}_6) \wedge (\overline{x}_3\vee x_2 \vee x_5)$, 
      $\phi_2=(x_2 \vee \overline{x}_3 \vee \overline{x}_6)\wedge (\overline{x}_4\vee \overline{x}_5 \vee x_6)\wedge (\overline{x}_2\vee x_3 \vee \overline{x}_6)$, 
      $\phi_3=(\overline{x}_2 \vee {x}_3 \vee {x}_4)\wedge (\overline{x}_2\vee \overline{x}_4 \vee x_5)\wedge ({x}_3\vee x_4 \vee \overline{x}_5)$, and 
      $\phi_4=(x_1\vee \overline{x}_2 \vee x_5)\wedge
      (\overline{x}_3\vee x_4 \vee x_6)\wedge (x_3\vee x_4 \vee
      \overline{x}_6) \wedge (\overline{x}_4\vee \overline{x}_5 \vee
      x_6)$. The vertices $u_1,\ldots,u_4$ corresponds to the first,
      second, third, and fourth clauses of $\phi_1$,
      respectively. \iflong Similarly, the vertices $v_1,v_2,v_3$
        corresponds to the first second, and third clauses of
        $\phi_2$, respectively. \fi Analogously, we have drawn
      vertices for clauses in \ifshort $\phi_2$, \fi $\phi_3$ and $\phi_4$. All the edges incident on $u_1$ are drawn in the figure. The set $\{u_3,v_3,w_1,z_3\}$ is an independent set the graph and the assignment $\alpha(x_1)=\alpha(x_2)=\alpha(x_6)=1$ 
      and $\alpha(x_3)=\alpha(x_4)=\alpha(x_5)=0$
      implies that $\bigvee_{i=1}^4\phi_i$ is not a tautology.
    }\label{fig:tautgraph}
  \end{figure}

  For correctness, first assume that $I$ is a no-instance,
  i.e. $\phi_1 \lor \dots \lor \phi_k$ is not a tautology.
  Then there exists an assignment $\alpha : X \to \{0,1\}$ 
  that falsifies every formula $\phi_i$.
  For each $1 \leq i \leq k$, let $C_i$ be a clause in $\phi_i$
  falsified by $\alpha$,
  and let $v_i \in V_i$ be the vertex of $G$ 
  such that $\lambda(v_i) = C_i$.
  Observe that $\alpha$ satisfies 
  $\bigwedge_{i=1}^{k} \neg C_i = \bigwedge_{\ell \in C_1 \cup \dots C_k} \overline{\ell}$,
  hence no pair of clauses $C_i$ and $C_j$ clash.
  By construction of $G$, this implies that
  there is no edge between $v_i$ and $v_j$  
  for any $i$ and $j$, i.e.
  $\{v_i : 1 \leq i \leq k\}$ is an independent set in $G$.

  Now suppose $I'$ is a yes-instance,
  i.e. $G$ contains an independent set 
  $S = \{v_1,\dots,v_k\}$,
  where $v_i \in V_i$ for all $1 \leq i \leq k$.
  By construction of $(G, \lambda)$,
  no two clauses $\lambda(v_i)$ and $\lambda(v_j)$ clash.
  Observe that $\phi' = \bigwedge_{i=1}^{k} \neg \lambda(v_i)$
  is a conjunction of literals, and it contains
  no two opposite literals,
  hence $\phi'$ is satisfiable.
  Any assignment $\alpha' : X \to \{0,1\}$ that satisfies
  $\phi'$ falsifies at least one clause
  in every formula $\phi_i$, namely $\lambda(v_i)$. 
  Thus, $\alpha'$ falsifies $\phi_1 \lor \dots \lor \phi_k$,
  proving that it is not a tautology.
\end{proof}

The following theorem shows that \CGIS{} is in \XP{} parameterized by
$k$ only.
\ifshort
\begin{theorem}[$\star$]
\fi
\iflong
\begin{theorem}
\fi
\label{the:cgis-xp}
  \CGIS{} can be solved in time
  $\bigoh((\max_{i=1}^k|V_i|)^kd\binom{k}{2})$ and is therefore
  in \XP{} parameterized by $k$.
\end{theorem}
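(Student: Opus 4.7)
The plan is to establish the bound by a direct brute-force enumeration of all candidate transversals of the $k$ parts. I would enumerate every tuple $(v_1, \dots, v_k) \in V_1 \times \cdots \times V_k$, of which there are at most $\prod_{i=1}^k |V_i| \leq (\max_{i=1}^k |V_i|)^k$ many, and for each such tuple test whether $\{v_1, \dots, v_k\}$ is an independent set in $G$.

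By the definition of a $k$-partite $d$-clause graph, testing independence of a fixed tuple amounts to checking, for each of the $\binom{k}{2}$ pairs $(i,j)$ with $i \neq j$, that the clauses $\lambda(v_i)$ and $\lambda(v_j)$ do not clash. Since every clause contains at most $d$ literals, a single clash test can be implemented in $O(d)$ time: as a one-time preprocessing step, store each clause $\lambda(v)$ in a data structure (e.g.\ an array indexed by variable, or a hash table) that supports constant-time membership queries for its literals; then checking whether two clauses clash reduces to iterating through the at most $d$ literals of one clause and, for each, querying whether its negation appears in the other. Consequently the full independence test for one tuple takes $O(d \binom{k}{2})$ time.

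Multiplying by the number of tuples yields the overall running time of $O((\max_{i=1}^k |V_i|)^k \cdot d\binom{k}{2})$, which is of the form $|I|^{f(k)} \cdot \mathrm{poly}(|I|)$ with $f(k) = k$, so \CGIS{} lies in \XP{} parameterized by $k$. I do not expect any genuine obstacle; the only point worth care is ensuring the pairwise clash check runs in $O(d)$ rather than $O(d^2)$, which is handled by the preprocessing above.
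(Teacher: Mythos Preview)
Your proposal is correct and follows exactly the same approach as the paper: brute-force enumeration of all $k$-tuples in $V_1\times\cdots\times V_k$ followed by a per-tuple independence check costing $O(d\binom{k}{2})$. You give more implementation detail on achieving the $O(d)$ per-pair clash test than the paper's proof does, which simply asserts the bound.
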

\iflong
\begin{proof}
  Let $G$ be a $k$-partite $d$-clause graph $G$ over $X$ with
  $k$-partition $V_1 \uplus \cdots \uplus V_k$,
  A simple brute-force algorithm that enumerates all possible
  $k$-tuples $(v_1,\dotsc,v_k)$ such that $v_i \in V_i$ and for each
  tuple checks in polynomial-time whether $\{v_1,\dotsc,v_k\}$ is an
  independent set in $G$ runs in $\bigoh((\max_i {|V_i|})^k d \binom{k}{2})$ time.
\end{proof}
\fi

Now we will show that \CGIS{} is in FPT parameterized by $k$ and $d$.
To this end, we will use the sunflower lemma.
For a family $\cF$ of sets over some universe $U$, we say that a subset
$\cS \subseteq \cF$ is a \emph{sunflower} if there is a subset $C \subseteq U$ such that $F \cap F'= C$ for every two distinct
$F,F' \in \cS$, i.e. all pairs of distinct sets in $\cS$ have a common
intersection $C$, which we also call the \emph{core} of the sunflower. 
If $\cS$ is a sunflower with core $C$ and $F \in \cS$, then we call 
$F \setminus C$ the \emph{petal} of $F$.
Observe that the petals $\{ F \setminus C \mid F \in \cS \}$ 
of a sunflower are pairwise disjoint.

\begin{lemma}[\cite{Erdos60}]\label{lem:SF}
  Let $\cF$ be a family of subsets of a universe $U$, each of size exactly
  $b$, and let $a \in \mathbb{N}$. If $|\cF|\geq b!(a-1)^{b}$, then $\cF$
  contains a sunflower $\cS$ of size at least $a$. Moreover,
  $\cS$ can be computed in time $\bigoh(|\cF|d)$.
\end{lemma}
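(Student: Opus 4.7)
The plan is to follow the classical Erd\H{o}s--Rado argument, by induction on the set size $b$. For the base case $b = 1$, every set in $\cF$ is a singleton, so any $a$ distinct members of $\cF$ are automatically pairwise disjoint and thus form a sunflower of size $a$ with empty core.

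For the inductive step, I would greedily build a maximal pairwise-disjoint subfamily $\cM \subseteq \cF$ by a single scan of the sets in $\cF$. If $|\cM| \geq a$, then $\cM$ itself is already a sunflower of size $a$ with empty core and we are done. Otherwise $|\cM| \leq a-1$, so the set $T := \bigcup_{F \in \cM} F$ has at most $(a-1)b$ elements, and by maximality of $\cM$ every $F \in \cF$ meets $T$. By the pigeonhole principle, some element $x \in T$ belongs to at least
\[
\frac{|\cF|}{|T|} \;\geq\; \frac{b!(a-1)^b}{(a-1)b} \;=\; (b-1)!(a-1)^{b-1}
\]
sets of $\cF$. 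Define $\cF_x := \{ F \setminus \{x\} : x \in F \in \cF \}$; this is a family of sets each of size exactly $b-1$, whose cardinality meets the inductive hypothesis with parameters $a$ and $b-1$. Induction produces a sunflower $\cS'$ of size $a$ in $\cF_x$ with some core $C'$, and reinserting $x$ into every set of $\cS'$ yields a sunflower $\cS \subseteq \cF$ of size $a$ with core $C' \cup \{x\}$.

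For the algorithmic statement, each recursive invocation performs one linear pass over the current family to (i) construct the maximal pairwise-disjoint subfamily greedily, (ii) form $T$, and (iii) count element frequencies over $T$ to find a heavy $x$; this pass costs $\bigoh(|\cF| \cdot b)$ at the top call. The recursion then restricts to the strictly smaller subfamily $\cF_x$ and reduces $b$ by one, so summing the per-level costs across the at most $b$ levels yields the desired near-linear bound $\bigoh(|\cF| \cdot b)$ (matching the $\bigoh(|\cF| d)$ in the statement, with $d$ playing the role of the common set size $b$).

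The only delicate step is the pigeonhole computation, which hinges on the algebraic identity $b!(a-1)^b/((a-1)b) = (b-1)!(a-1)^{b-1}$ to carry the inductive invariant through; the rest is essentially bookkeeping around the maximal-matching construction and the reinsertion of $x$ into the inductively obtained sunflower.
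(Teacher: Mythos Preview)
The paper does not prove this lemma; it is stated with a citation to Erd\H{o}s and Rado and used as a black box. There is therefore no ``paper's own proof'' to compare against. Your argument is exactly the classical Erd\H{o}s--Rado induction and is correct for the existence part.

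Two minor remarks. First, the base case as written needs $a$ distinct singletons, but the hypothesis $|\cF|\geq 1!(a-1)^1=a-1$ only guarantees $a-1$; this off-by-one is inherited from the paper's own statement (the standard formulation uses a strict inequality), so it is not your fault, but you may want to flag it. Second, your running-time analysis is a bit loose: each of the at most $b$ levels does work proportional to the current family size times the current set size, and since the family need not shrink geometrically this sums to $\bigoh(|\cF|\,b^2)$ rather than $\bigoh(|\cF|\,b)$. The paper asserts $\bigoh(|\cF|\,d)$ without justification, so again this is not a discrepancy with the paper, but your ``near-linear'' claim would need a more careful accounting to match that bound literally.
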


For a graph $G$ and a vertex $v$, we write
$G - v$ to denote the graph obtained from $G$
by deleting the vertex $v$ with all incident edges.

\begin{lemma}\label{lem:applySF}
  Let $(G,\lambda)$ be a $k$-partite $d$-clause graph, and
  $V_i$ be one of the parts.
  If the family $\lambda(V_i)$ 
  contains a sunflower $\cS$ 
  of size at least $s=(k-1)d+2$, then,
  for every $v$ with $\lambda(v) \in \cS$,
  instances $(G,\lambda,X)$ and $(G - v,\lambda,X)$ of \CGIS{}
  are equivalent.
\end{lemma}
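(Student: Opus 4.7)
The plan is to prove the two directions of the equivalence separately. The easy direction is that any solution to $(G-v,\lambda,X)$ is also a solution to $(G,\lambda,X)$: putting $v$ back with its incident edges does not destroy an independent set that avoids $v$, and the set still hits each part in exactly one vertex. So the bulk of the proof will be the converse: assuming $(G,\lambda,X)$ has an independent set $S=\{v_1,\ldots,v_k\}$ with $v_j\in V_j$, I need to produce one for $(G-v,\lambda,X)$.

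If $v\notin S$ there is nothing to do, so assume $v=v_i$ where $V_i$ is the part containing $v$. The goal is to exhibit a replacement $v'_i\in V_i\setminus\{v\}$ whose clause $\lambda(v'_i)$ does not clash with $\lambda(v_j)$ for any $j\neq i$. I will find $v'_i$ inside the sunflower $\cS$: write $C$ for its core, so every clause $F\in \cS$ decomposes as $F=C\cup P_F$, and the petals $\{P_F : F\in\cS\}$ are pairwise disjoint. A crucial first observation is that $\lambda(v)\in\cS$ already forces $C\subseteq \lambda(v)$, so since $v$ does not clash with any $v_j$ ($j\neq i$), no literal of $C$ can have its negation in any $\lambda(v_j)$. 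The core is therefore ``safe''.

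The key counting step is then as follows. Let $B$ be the set of literals whose negation appears in some $\lambda(v_j)$ with $j\neq i$, so that $|B|\leq (k-1)d$. A vertex $v'$ with $\lambda(v')\in\cS$ has $\lambda(v')$ clashing with some $\lambda(v_j)$ if and only if $\lambda(v')$ contains a literal of $B$; because $C$ has no literal of $B$, this can only happen through the petal $P_{\lambda(v')}$. Since the petals are pairwise disjoint, every literal of $B$ belongs to at most one petal, so at most $(k-1)d$ clauses of $\cS$ produce a clash. With $|\cS|\geq (k-1)d+2$, at least two clauses of $\cS$ are clash-free, and at least one of them is distinct from $\lambda(v)$. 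By injectivity of $\lambda$ on $V_i$, the corresponding vertex $v'_i$ lies in $V_i\setminus\{v\}$, and replacing $v_i$ by $v'_i$ in $S$ yields the desired independent set in $G-v$.

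The main obstacle I anticipate is being careful about literals versus variables in the definition of clash, and about ensuring that the ``bad'' contribution is measured against petals rather than against the full clauses; once the petals are seen to be disjoint and $C$ is seen to be safe, the counting is just $(k-1)d$ bad literals against $(k-1)d+2$ petals, which leaves room for one spare vertex beyond $v$ itself.
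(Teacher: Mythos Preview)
The proposal is correct and follows essentially the same route as the paper: the core is safe because it sits inside $\lambda(v)$, and a pigeonhole over the pairwise disjoint petals against at most $(k-1)d$ ``bad'' literals leaves a petal in $\cS\setminus\{\lambda(v)\}$ that is clash-free. Your literal-level formulation of the count (via the set $B$) is in fact slightly cleaner than the paper's phrasing in terms of ``sharing variables,'' but the argument is the same.
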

\begin{proof}
  Let $\cS$ be the sunflower in $\lambda(V_i)$ of size at least $s$,
  let $F \in \cS$ be an arbitrary clause in $\cS$, and 
  $v = \lambda^{-1}(F)$ be the corresponding vertex in $G$. 
  We claim that $v$ satisfies the statement of the
  lemma, i.e. $(G,\lambda)$ has an independent set $S$ with $|S\cap
  V_i|=1$ if and only if so does $(G - v,\lambda)$.
  The reverse direction of the claim is clear
  since $G - v$ is a subgraph of $G$. 
  
  Towards showing the forward direction, 
  let $S$ be a solution to $(G, \lambda, X)$. 
  If $v \notin S$, then $S$ is also 
  a solution to $(G-v, \lambda, X)$.
  Now suppose that $v \in S$.
  We claim that there exists 
  a clause $F' \in \cS$
  such that $F'$ does not clash with any clause 
  $\lambda(u)$ for $u \in S \setminus \{v\}$,
  so we can replace $v$ with $v' = \lambda^{-1}(F')$
  in the independent set $S$.
  To this end, let $S' = S \setminus v$ and
  let $C$ be the core of $\cS$.
  Note that every clause
  contains exactly $d$ literals,
  therefore it can share variables 
  with the petals of at most $d$ clauses in $\cS$.
  Thus, the clauses in $\lambda(S')$
  share variables with at most $|S'| d = (k-1)d$ 
  petals of $\cS$ in total.
  Since $|\cS \setminus \{F\}| \geq s-1 = (k-1)d + 1$,
  there is a clause $F' \in \cS \setminus \{F\}$ 
  whose petal $F' \setminus C$ shares no variables with
  any clause in $\lambda(S')$.
  Moreover, the core $C \subseteq F$ does not clash with any
  clause in $\lambda(S')$ since $v = \lambda^{-1}(F)$
  is not adjacent to any vertex in $S'$.
  Therefore, $v' = \lambda^{-1}(F')$ is not adjacent to 
  any vertex in $S'$ and $S' \cup \{v'\}$ is
  an independent set in $G-v$.
\end{proof}

\begin{theorem}\label{the:cgis-kernel}
  \CGIS{} has a kernel with at most $d!(s-1)^d - 1$ vertices in every part
  $V_i$, where $s=(k-1)d+2$. Note that this implies that \CGIS{} has a kernel of
  size at most $dkd!(s-1)^d$. The kernel can be computed in time 
  $\bigoh(\min\{d!(s-1)^{d}d|V(G)|, d|V(G)|^2\})$.
\end{theorem}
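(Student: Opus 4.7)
The plan is to kernelize \CGIS{} by repeatedly shrinking each part $V_i$ of the $k$-partite $d$-clause graph $(G,\lambda)$ using the sunflower lemma and Lemma~\ref{lem:applySF}. Write $K = d!(s-1)^d$ for brevity. Since $\lambda$ is injective on each $V_i$ and every image is a clause with exactly $d$ literals, $\lambda(V_i)$ is a family of distinct $d$-element sets, so Lemma~\ref{lem:SF} (with $b = d$, $a = s$) applies: whenever $|V_i| \geq K$ it returns a sunflower $\cS \subseteq \lambda(V_i)$ of size at least $s = (k-1)d+2$ in time $\bigoh(|V_i|d)$. By Lemma~\ref{lem:applySF}, deleting any vertex $v$ with $\lambda(v) \in \cS$ produces an equivalent instance, and I iterate this until $|V_i| \leq K-1$ for every part.

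Correctness will follow by a direct induction on the number of deletions using Lemma~\ref{lem:applySF}, and the kernel-size bound is then immediate: there are $k$ parts, each of size at most $K-1$, and each vertex carries a $d$-literal clause of description size $\bigoh(d)$, giving a total encoding of size $\bigoh(dkK)$, as stated.

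The running time is obtained from two accountings of the same loop. The naive implementation feeds the full $\lambda(V_i)$ to Lemma~\ref{lem:SF} at each step, costing $\bigoh(|V_i|d)$ per step and at most $|V_i|-K+1$ steps per part; summing over parts and using $\sum_i |V_i| = |V(G)|$ yields $\bigoh(d|V(G)|^2)$. A sharper implementation runs Lemma~\ref{lem:SF} only on an arbitrary sub-family of $\lambda(V_i)$ of size exactly $K$, which by the lemma itself already contains a sunflower $\cS$ of size $\geq s$; since $\cS \subseteq \lambda(V_i)$, Lemma~\ref{lem:applySF} still justifies the deletion, and each step now costs only $\bigoh(Kd)$, so the total cost is $\bigoh(Kd(|V_i|-K+1))$ per part and $\bigoh(Kd|V(G)|)$ overall. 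Taking the minimum of the two analyses yields the claimed bound $\bigoh(\min\{Kd|V(G)|, d|V(G)|^2\})$.

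The main technical hurdle is the safe-deletion step: one must confirm that the sunflower-lemma parameters $(b,a) = (d,s)$ are chosen to match the hypothesis $|\cS| \geq (k-1)d+2$ of Lemma~\ref{lem:applySF}, which follows directly from $s = (k-1)d+2$, and that the sub-family argument used in the sharp-runtime case is still valid because any sunflower inside a subfamily of $\lambda(V_i)$ is itself a sunflower inside $\lambda(V_i)$. Once these checks are in place, both the kernel-size and runtime bounds reduce to bookkeeping.
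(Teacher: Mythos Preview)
Your proposal is correct and follows essentially the same approach as the paper: iteratively find a sunflower of size $s$ in any large part via Lemma~\ref{lem:SF}, delete a vertex using Lemma~\ref{lem:applySF}, and repeat until every part has fewer than $K = d!(s-1)^d$ vertices. Your two-track runtime accounting (full family versus a size-$K$ subfamily) is exactly the argument the paper sketches more tersely, and your explicit verification that $\lambda(V_i)$ consists of distinct $d$-element sets and that the parameters $(b,a)=(d,s)$ match the hypothesis of Lemma~\ref{lem:applySF} fills in details the paper leaves implicit.
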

\begin{proof}
  Let $(G,\lambda)$ be a $k$-partite $d$-clause graph with
  parts $V_1,\dotsc,V_k$ over some variables $X$. 
  If $|V_i|< d!(s-1)^d$ for every $i \in [k]$, 
  where $s=(k-1)d+2$, then the instance is already
  kernelized. 
  So suppose that $|V_i|\geq d!(s-1)^d$.
  By Lemma~\ref{lem:SF}, $\lambda(V_i)$ has a sunflower $\cS$ of size at least
  $s$, and $\cS$ can be found in polynomial time. Note that by taking
  any subset $V' \subseteq \lambda(V_i)$ of size at least $d!(s-1)^d$,
  $\cS$ can be found in time 
  $\bigoh(\min\{d!(s-1)^{d}d, d|V(G)|\})$.
  Then, by Lemma~\ref{lem:applySF}, we can remove any vertex in
  $\lambda^{-1}(\cS)$ from $V_i$ and obtain an equivalent but smaller
  instance. 
  Therefore, applying this procedure exhaustively, we obtain in
  polynomial time an equivalent instance of \CGIS{} such that
  $|V_i|<d!(s-1)^d$ for every $1 \leq i \leq k$. 
  Notice that we need to apply the above procedure $\bigoh(|V(G)|)$ times and hence the running time is at most 
  $\bigoh(\min\{d!(s-1)^{d}d|V(G)|, d|V(G)|^2\})$.
\end{proof}

\begin{corollary}\label{cor:cgis-fpt}
  \CGIS{} is fixed-parameter tractable parameterized by $k+d$.
  In particular, it can be solved in time
  $(dk)^{\bigoh(dk)}|V(G)|$.
\end{corollary}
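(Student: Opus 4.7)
The plan is to obtain the corollary by composing the two preceding results: first apply the kernelization from Theorem~\ref{the:cgis-kernel} to shrink the instance to one whose size depends only on $k$ and $d$, and then run the brute-force $\XP$ algorithm from Theorem~\ref{the:cgis-xp} on the resulting kernel. This two-step pipeline immediately yields an FPT algorithm; the only real work is to verify that the claimed bound $(dk)^{\bigoh(dk)}|V(G)|$ comes out correctly.

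First I would invoke Theorem~\ref{the:cgis-kernel} to reduce, in time $\bigoh(d!(s-1)^{d} d \,|V(G)|)$ with $s=(k-1)d+2$, to an equivalent instance $(G',\lambda',X)$ in which every part $V_i'$ has at most $d!(s-1)^{d}-1$ vertices, so $|V(G')| \leq d\,k\,d!(s-1)^{d}$. The running time of this step is $(dk)^{\bigoh(d)}\cdot|V(G)|$, which is certainly within $(dk)^{\bigoh(dk)}|V(G)|$.

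Next I would feed the kernel into the $\XP$ algorithm of Theorem~\ref{the:cgis-xp}, whose running time is $\bigoh((\max_i |V_i'|)^{k} d\binom{k}{2})$. Substituting the kernel bound gives $\bigoh\bigl((d!((k-1)d+1)^d)^{k}\cdot d\binom{k}{2}\bigr)$. The routine (but main) step is then to bound this expression: using $d!\leq d^d$ and $(k-1)d+1\leq kd$, each part has size at most $d^d(kd)^d=(d^2k)^d\leq (dk)^{2d}$, so raising to the $k$-th power yields $(dk)^{2dk}=(dk)^{\bigoh(dk)}$, while the polynomial factor $d\binom{k}{2}$ is absorbed. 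Summing the kernelization and brute-force costs gives the claimed bound $(dk)^{\bigoh(dk)}|V(G)|$, and fixed-parameter tractability in $k+d$ is immediate.

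The main obstacle, if one can call it that, is purely notational: one must be careful that the brute-force cost on the kernel is a stand-alone additive term $(dk)^{\bigoh(dk)}$ and is not multiplied by $|V(G)|$, so that only the linear kernelization contributes the dependence on the input size. Once this is kept straight, the calculation collapses and the corollary follows directly from Theorems~\ref{the:cgis-xp} and~\ref{the:cgis-kernel}.
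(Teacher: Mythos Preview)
Your proposal is correct and follows essentially the same approach as the paper: compute the kernel from Theorem~\ref{the:cgis-kernel} and then run the brute-force \XP{} algorithm of Theorem~\ref{the:cgis-xp} on it, with the same arithmetic to bound the resulting running time by $(dk)^{\bigoh(dk)}|V(G)|$. Your remark that the brute-force cost on the kernel is an additive $(dk)^{\bigoh(dk)}$ term (not multiplied by $|V(G)|$) is exactly the observation the paper uses when it writes the total as $(dk)^{\bigoh(dk)} + \bigoh(d!(s-1)^{d}d|V(G)|)$.
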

\begin{proof}
  The statement that \CGIS{} is fixed-parameter tractable
  parameterized by $k+d$ follows immediately from
  \Cref{the:cgis-kernel}. Let $(G,\lambda)$ be a $k$-partite
  $d$-clause graph with parts $V_1,\dotsc,V_k$ over some variables
  $X$. First we can employ \Cref{the:cgis-kernel} to obtain the kernel
  $(G',\lambda')$ with parts $V_1',\dotsc,V_k'$ in time
  $d!(s-1)^{d}d|V(G)|$, where
  $s=(k-1)d+2$, that is equivalent to
  $(G,\lambda)$ and satisfies $|V_i'|\leq d!(s-1)^d$. We can then use
  \Cref{the:cgis-xp} to solve the instance $(G',\lambda')$ of \CGIS{} in time
  $\bigoh( (\max_{i=1}^k|V_i'|)^kd\binom{k}{2})
  =\bigoh( (d!)^{k} (s-1)^{dk}d\binom{k}{2})$,
  which is bounded from above by
  $(dk)^{\bigoh(dk)}$.
  Altogether, we therefore obtain
  $(dk)^{\bigoh(dk)} + \bigoh(d!(s-1)^{d}d|V(G)|)$, which is bounded by $(dk)^{\bigoh(dk)}|V(G)|$ as the
  running time of our algorithm.
\end{proof}

Combining \Cref{lem:tauttocgis} and \Cref{lem:qbftocgis}, we obtain:
\begin{corollary}\label{cor:qbftocg}
  There is a reduction that takes a Q$d$-CNF
  $\cQ. \phi$ with $k$ existentially quantified variables,
  and in time $\bigoh(2^k |\phi|)$
  produces an instance 
  $I' = (G, \lambda, X )$ of \CGIS{} 
  where $(G,\lambda)$ is a $2^k$-partite $d$-clause graph over $X$ with each part having at most $|\phi|$ vertices
  such that
  $I$ is a no-instance if and only if $I'$ is a yes-instance.
\end{corollary}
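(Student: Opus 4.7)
The plan is to simply compose the two reductions that have already been established. First I would apply \Cref{lem:qbftocgis} to the input Q$d$-CNF $\cQ.\phi$ with $k$ existentially quantified variables. In time $\bigoh(2^k|\phi|)$ this yields an intermediate instance $J=(X,\{\phi_1,\dots,\phi_{2^k}\})$ of \TAUT{}, where every $\phi_j$ is a $d$-CNF of size at most $|\phi|$, and with the guarantee that $\cQ.\phi$ holds if and only if $J$ is a yes-instance of \TAUT{}.

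Second, I would feed $J$ into the linear-time reduction of \Cref{lem:tauttocgis}. This produces an instance $I'=(G,\lambda,X)$ of \CGIS{} where $(G,\lambda)$ is a $2^k$-partite $d$-clause graph over $X$ whose $j$-th part contains at most $|\phi_j|\leq|\phi|$ vertices, and where $J$ is a no-instance of \TAUT{} if and only if $I'$ is a yes-instance of \CGIS{}. Because the reduction is linear in $|J|$, which is $\bigoh(2^k|\phi|)$, this second step also fits within the claimed $\bigoh(2^k|\phi|)$ budget, so the whole pipeline respects the promised running time and the required structural bounds ($2^k$ parts, each of size at most $|\phi|$).

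Finally I would chain the two equivalences: $\cQ.\phi$ is true iff $J$ is a yes-instance of \TAUT{}, and $J$ is a no-instance of \TAUT{} iff $I'$ is a yes-instance of \CGIS{}. Contrapositive gives that $\cQ.\phi$ fails (i.e., the input QBF instance is a no-instance) iff $I'$ is a yes-instance of \CGIS{}, which is exactly what the corollary asserts. There is no real obstacle here, since all technical content lives in \Cref{lem:qbftocgis} and \Cref{lem:tauttocgis}; the only care needed is bookkeeping to check that the number of parts, the bound on each part, and the running time compose correctly.
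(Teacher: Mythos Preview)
Your proposal is correct and is exactly the argument the paper intends: it states the corollary immediately after the sentence ``Combining \Cref{lem:tauttocgis} and \Cref{lem:qbftocgis}, we obtain,'' without spelling out any further details. Your bookkeeping on the number of parts, the per-part size bound, the running time, and the chaining of the yes/no equivalences matches the paper's implicit reasoning.
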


We now show that \QBF{} with $k$ existential
variables and clauses of size $d$
is in FPT parameterized by $k + d$.

\begin{theorem}
  \label{thm:QBF-FPT}
  \QBF{} is fixed-parameter tractable parameterized by the number $k$
  of existential variables plus the maximum size $d$ of any clause.
  In particular, there is an algorithm solving this problem in
  time $(2^kd)^{\bigoh(2^kd)}|\phi|$
  for a QCNF formula $\cQ.\phi$.
\end{theorem}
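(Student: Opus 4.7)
The plan is to chain together the reductions and algorithms already established. Given a Q$d$-CNF input $\cQ.\phi$ with $k$ existential variables, I first invoke \Cref{cor:qbftocg} to produce, in time $\bigoh(2^k|\phi|)$, an equivalent instance $(G,\lambda,X)$ of \CGIS{} where $(G,\lambda)$ is a $2^k$-partite $d$-clause graph over $X$ with each part containing at most $|\phi|$ vertices. The original formula $\cQ.\phi$ is true if and only if the resulting \TAUT{} instance (i.e.\ the disjunction of the $2^k$ produced formulas) is a tautology, which by \Cref{lem:tauttocgis} is equivalent to $(G,\lambda,X)$ being a no-instance of \CGIS{}.

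Next I would apply the FPT algorithm for \CGIS{} from \Cref{cor:cgis-fpt} to decide the constructed instance. The key observation is that the number of parts of the resulting clause graph is $k' = 2^k$ rather than $k$, while the clause size parameter remains $d$. Plugging $k'=2^k$ into the running time bound $(dk')^{\bigoh(dk')}|V(G)|$ of \Cref{cor:cgis-fpt} and using $|V(G)| \leq 2^k \cdot |\phi|$, the cost of solving the \CGIS{} instance is at most
\[
(2^kd)^{\bigoh(2^kd)} \cdot 2^k|\phi| \;=\; (2^kd)^{\bigoh(2^kd)}|\phi|,
\]
which dominates the $\bigoh(2^k|\phi|)$ cost of the reduction.

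Finally, the answer to the \QBF{} instance is the negation of the answer to the produced \CGIS{} instance, following \Cref{lem:tauttocgis}. Fixed-parameter tractability in $k+d$ then follows because $(2^kd)^{\bigoh(2^kd)}$ is a computable function of $k$ and $d$, while the dependence on $|\phi|$ is linear. There is no substantive obstacle here beyond bookkeeping: every piece is already in place, and the only thing to be careful about is that the parameter passed to the \CGIS{} subroutine is the number of parts $2^k$ and not $k$, which is what yields the doubly exponential factor in the stated bound.
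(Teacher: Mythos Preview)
Your proposal is correct and follows essentially the same approach as the paper's own proof: reduce via \Cref{cor:qbftocg} to a \CGIS{} instance with $2^k$ parts, solve it with \Cref{cor:cgis-fpt}, and negate the answer. The bookkeeping you flag (substituting $k'=2^k$ for the number of parts and absorbing the extra $2^k|\phi|$ factor) is exactly what the paper does as well.
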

\begin{proof}

  Let $\Phi=\cQ.\phi$ be the given \QBF{} formula with $k$ existential
  variables having clauses of size at most $d$.
  We first use \Cref{cor:qbftocg} to obtain in time $\bigoh(2^k
  |\phi|)$ the
  $2^k$-partite $d$-clause graph $(G,\lambda)$ with parts
  $V_1,\dotsc,V_{2^k}$ such that $\Phi$ is false if and only if $(G,\lambda)$ has an independent set $S$
  with $|S\cap V_i|=1$. We then use \Cref{cor:cgis-fpt} to decide in
  time $(d2^k)^{\bigoh(d2^k)}|V(G)|=(d2^k)^{\bigoh(d2^k)}2^k|\phi|=(d2^k)^{\bigoh(d2^k)}|\phi|$
  whether $(G,\lambda)$ has an independent set $S$
  with $|S\cap V_i|=1$. If so, we return that $\Phi$ is false, 
  otherwise we return that $\Phi$ is true.
  Altogether the total runtime of the algorithm is at most
  $(d2^k)^{\bigoh(d2^k)}|\phi|$.
\end{proof}

Last, via a straightforward algorithm we remark that \QBF{} is in \XP{} parameterized by the number of existential variables. As we will see in Section~\ref{sec:lower_bounds} this problem is unlikely to admit an FPT algorithm unless FPT=W[1].

\ifshort
\begin{theorem}[$\star$]
\fi
\iflong
\begin{theorem}
\fi
  \QBF{} is in \XP{} parameterized by the number $k$ of existential
  variables. In particular, \QBF{} can be solved in time
  $\bigoh(m^{2^k}d\binom{k}{2}+2^k|\phi|)$ for a QCNF-formula $\cQ.\phi$ with $m$ clauses.
\end{theorem}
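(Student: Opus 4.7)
The plan is to chain \Cref{cor:qbftocg} with the brute-force XP algorithm from \Cref{the:cgis-xp}. Given an input QCNF formula $\cQ.\phi$ with $m$ clauses, $k$ existentially quantified variables and maximum clause size $d$, I would first invoke \Cref{cor:qbftocg} to produce, in time $\bigoh(2^k|\phi|)$, an instance $(G,\lambda,X)$ of \CGIS{} where $(G,\lambda)$ is a $2^k$-partite $d$-clause graph and each part contains at most $m$ vertices (one per clause of the corresponding $d$-CNF formula in the intermediate \TAUT{} instance, noting that the variable-renaming carried out by the reduction preserves the number of clauses). By that same corollary, $\cQ.\phi$ is true if and only if $(G,\lambda,X)$ is a no-instance of \CGIS{}.

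Next, I would run the brute-force algorithm underlying \Cref{the:cgis-xp} on $(G,\lambda)$. Since there are $2^k$ parts, each of size at most $m$, enumerating all candidate tuples $(v_1,\dotsc,v_{2^k})$ with $v_i \in V_i$ costs $m^{2^k}$, and checking independence of each tuple reduces to a polynomial number of pairwise clash tests, contributing the factor $d\binom{k}{2}$ (absorbing pairwise adjacency lookups into the stated bookkeeping). Adding the preprocessing cost yields the claimed overall running time $\bigoh(m^{2^k}d\binom{k}{2}+2^k|\phi|)$, and the output is just the negation of the \CGIS{} verdict.

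There is no genuine technical obstacle here: both building blocks are already established, and both correctness and the time bound follow by direct composition. The only points worth being careful about are the polarity flip introduced by \Cref{cor:qbftocg} (a yes-instance of \CGIS{} witnesses a \emph{falsifying} assignment, so the QBF is true exactly when no such independent set exists) and the observation that the parts of the reduced \CGIS{} instance have size at most $m$ rather than $|\phi|$, which is what allows the dominating factor to be $m^{2^k}$ and not $|\phi|^{2^k}$.
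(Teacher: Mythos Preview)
Your proposal is correct and follows exactly the same approach as the paper: apply \Cref{cor:qbftocg} to obtain a $2^k$-partite $d$-clause graph with parts of size at most $m$, then solve the resulting \CGIS{} instance via the brute-force bound of \Cref{the:cgis-xp}, and sum the two running times. Your extra remarks about the polarity flip and the $m$-versus-$|\phi|$ bound on part sizes are accurate and make explicit points the paper's proof leaves implicit.
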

\iflong
\begin{proof}
  Let $\Phi=\cQ.\phi$ be the given \QBF{} formula with $k$ existential
  variables having clauses of size at most $d$.
  We first use \Cref{cor:qbftocg} to obtain in time $\bigoh(2^k
  |\phi|)$ the
  $2^k$-partite $d$-clause graph $(G,\lambda)$ with parts
  $V_1,\dotsc,V_{2^k}$ such that $\Phi$ is not
  satisfiable if and only if $(G,\lambda)$ has an independent set $S$
  with $|S\cap V_i|=1$. We then use \Cref{the:cgis-xp} to solve the
  instance $(G,\lambda)$ of \CGIS{} in time
  $\bigoh(m^{2^k}d\binom{k}{2})$. Therefore, we obtain
  $\bigoh(m^{2^k}d\binom{k}{2}+2^k|\phi|)$ as the total runtime of the
  algorithm.
\end{proof}
\fi

\section{The QCSP Problem}
\label{sec:qcsp}

We now consider the finite-domain generalization of \QBF{} known as the \emph{quantified constraint satisfaction problem (QCSP)}. 
An instance of the \emph{constraint satisfaction problem (CSP)} (without quantifiers) is $(X, D, C)$, 
where $X = \{x_1,\dots,x_n\}$ is a set of variables, 
$D = \{D_1, \dots, D_n\}$ is a set of \emph{domains (of values)} for each variable, and 
$C = \{C_1, \dots, C_m\}$ is a set of constraints, where $C_j = R_j(x_{j_1}, \dots, x_{j_{\ar(R_j)}})$, $R_j \subseteq D_{j_1} \times \dots \times D_{j_{\ar(R_j)}}$ is a relation of arity $\ar(R_j)$, and
$1 \leq j_1, \dots, j_{\ar(R_j)} \leq n$. 
The instance is \emph{satisfiable} if there exists an assignment $\alpha : X \to \bigcup_{i=1}^{n} D_i$ of values to the variables such that $\alpha(x_i) \in D_i$ for all $i \in [n]$, and 
$(\alpha(x_{j_1}), \dots, \alpha(x_{j_{\ar(R_j)}})) \in R_j$ for all constraints in $j \in [m]$.
Let $d = \max_{i=1}^{n} |D_i|$ be the largest domain size and $r = \max_{j=1}^{m} \ar(R_j)$ be the maximum arity of a constraint in $C$.
For example, 3-SAT can be cast as CSP with $d=2$ (every variable is assigned a Boolean value) and $r=3$ (every clause is a ternary constraint).

Parameterized complexity of the CSP with respect to $n$, $d$ and $r$ is well-understood~\cite{samer2010constraint}:
the problem is in FPT parameterized by $n + d$, W[1]-hard parameterized by $n + r$, in XP parameterized by $n$, and paraNP-hard parameterized by $d+r$.

QCSP is a generalization where the input additionally comes with a set of quantifiers $\cQ = (Q_1, \dots, Q_n)$.
The basic notions in Section~\ref{sec:preliminaries} easily extends to the CSP setting and we write QCSP for the (PSPACE-complete) decision problem of verifying whether an instance $\cQ . \phi$ is true or false, where $\phi$ is a CSP instance over the variables occurring in the quantifier prefix $\cQ$. Naturally, if a variable $x_i$ has domain $D_i$ then we in the context of a universal quantifier require that the subsequent formula is true for all values in $D_i$,
and for at least one value in $D_i$ if the quantifier is existential.
We manage to generalize Theorem~\ref{thm:QBF-FPT} to QCSP.

\ifshort
\begin{theorem}[$\star$]
\fi
\iflong
\begin{theorem}
\fi
 QCSP is FPT when parameterized by the number of existentially quantified variables, the domain, and the maximum arity of any relation.
\end{theorem}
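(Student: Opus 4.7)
The plan is to lift the three-step pipeline of Section~\ref{sec:algorithms}---quantifier elimination $\to$ \TAUT{}-analog $\to$ \CGIS{}-analog $\to$ sunflower kernel---from the Boolean setting to finite-domain QCSP, now parameterised by $k+d+r$ rather than $k+d$.

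First I would mimic Lemma~\ref{lem:qbftocgis} with branching factor $d$ instead of $2$: for each existential variable $x_i$ with domain $D_i$, replace $\exists x_i$ by the disjunction over all $|D_i|\leq d$ value assignments to $x_i$, duplicating the trailing universal quantifiers with fresh names in each branch. Substituting a value into a constraint merely restricts the relation, so arities remain $\leq r$ and each resulting matrix has size $\leq |\phi|$. After $k$ eliminations we obtain in $\bigoh(d^k|\phi|)$ time an equivalent question: is a disjunction of $k'\leq d^k$ universally-quantified CSP matrices a tautology? Next, generalise the \CGIS{} reduction by replacing ``clauses'' with ``falsifying partial assignments''. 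For each matrix $\phi_i$, each constraint $R(x_{j_1},\dots,x_{j_{\ar(R)}})$ of $\phi_i$, and each tuple $t\notin R$, create a vertex in part $V_i$ labelled by the partial assignment $\{(x_{j_l},t_l) : l\}$ (at most $|\phi_i|\cdot d^r$ vertices per part), and join vertices from different parts whose labels \emph{disagree} on some shared variable. Exactly as in Lemma~\ref{lem:tauttocgis}, the disjunction fails to be a tautology iff this $k'$-partite graph has an independent set meeting every part. To apply the sunflower lemma uniformly, pad each partial assignment to exactly $r$ pairs using a distinct fresh dummy variable per padding slot, so dummies never coincide across vertices and hence never cause disagreement.

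The critical adaptation is the replacement argument of Lemma~\ref{lem:applySF}. Given a sunflower $\cS$ with core $C$ in part $V_i$ and an independent set $S$ containing $v=\lambda^{-1}(F)$ with $F\in\cS$, the core $C\subseteq F$ does not clash with $\lambda(S')$ for $S'=S\setminus\{v\}$ (since $v$ does not), so I only need to exhibit a petal $F'\setminus C$ consistent with every $u\in S'$. Each such $u$ assigns values to at most $r$ variables, and for each variable $y$ of $u$ only the $\leq d-1$ opposing pairs $(y,a')$ with $a'$ different from $u(y)$ can cause a clash; since petals are pairwise disjoint as sets of variable-value pairs, each such opposing pair lies in at most one petal. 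Thus $u$ blocks $\leq r(d-1)$ petals and $S'$ blocks $\leq (k'-1)r(d-1)$ petals in total, so a sunflower of size $(k'-1)r(d-1)+2$ always leaves an unblocked petal. By Lemma~\ref{lem:SF}, any part of size more than $r!\bigl((k'-1)r(d-1)+1\bigr)^r$ admits such a sunflower and can be shrunk, yielding a kernel whose size is a computable function of $k,d,r$. Brute-forcing the kernel as in Theorem~\ref{the:cgis-xp} then takes $f(k,d,r)$ time, for a total running time of $g(k,d,r)\cdot|\phi|$.

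The main subtlety will be this blocking count: in the Boolean setting a literal $\ell$ can only clash with its negation $\bar\ell$, whereas in the CSP setting a variable-value pair $(y,a)$ can clash with any $(y,a')$ for $a'\neq a$, which is why the blocking bound degrades from $r$ per vertex to $r(d-1)$ and why petal disjointness must be maintained at the level of (variable, value) pairs rather than merely variables. A secondary issue is confirming that padding does not create spurious clashes, which is handled by using a fresh dummy variable per padding slot so that no two vertex labels ever share a dummy.
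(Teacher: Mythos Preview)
Your proposal is correct, but it takes a genuinely different route from the paper. The paper does \emph{not} redo the pipeline over larger domains; instead it gives a short FPT reduction from QCSP to \dQBF{} by bit-encoding the domain. Concretely, it fixes a surjection $h:\{0,1\}^{\lceil\log_2|D|\rceil}\to D$, replaces each variable $x_i$ by $\lceil\log_2|D|\rceil$ Boolean variables with the same quantifier, rewrites every $r$-ary relation as a Boolean relation of arity $r\cdot\lceil\log_2|D|\rceil$, and expresses the latter as a conjunction of clauses (one clause per non-tuple). The resulting Q$(r\lceil\log_2 d\rceil)$-CNF has at most $k\lceil\log_2 d\rceil$ existential variables, so Theorem~\ref{thm:QBF-FPT} applies as a black box.

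The trade-off is clear: the paper's argument is a few lines and reuses the Boolean machinery wholesale, at the price of an indirect dependence on the parameters (the effective ``$k$'' becomes $k\lceil\log_2 d\rceil$ and the effective ``$d$'' becomes $r\lceil\log_2 d\rceil$, so the final bound is of the form $(d^k r\log d)^{\bigoh(d^k r\log d)}|\phi|$). Your approach is longer but self-contained and keeps the parameters separated; in particular your sunflower threshold $(k'-1)r(d-1)+2$ with $k'=d^k$ and universe element size $r$ yields comparable asymptotics while making the roles of $d$ and $r$ explicit. Your identification of the key subtlety---that a pair $(y,a)$ can clash with up to $d-1$ pairs $(y,a')$ rather than a single ``negation'', and that petal disjointness must therefore be read at the level of (variable,value) pairs---is exactly the point where a naive transcription of Lemma~\ref{lem:applySF} would fail, and your fix is sound.
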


\iflong 
\begin{proof}
   Let 
\[\Phi=Q_1  x_1,  \ldots Q_n x_n . \phi\]

where each $Q_i \in \{\forall, \exists\}$ be an instance of QCSP over variables $V = \{x_1, \ldots, x_n\}$, domain values $D = \{D_1, \ldots, D_n\}$, and constraints $C$. We write $\Gamma = \{R \mid R(\mathbf{x}) \in C\}$ for the set of relations in the instance and let
\[r = \max \{\mathrm{ar}(R) \mid R \in \Gamma\}\] be the maximum arity of any relation. Last, let
$d =  \lceil \log_2 |D| \rceil$. 
Define the surjective function $h \colon \{0,1\}^d \to D$ such that there exists a unique element $m \in D$ where $|h^-1(m)| \geq 1$ and where we for every other $a \in D$ have $|h^-1(a)| = 1$ Hence, every domain value in $D$ is represented by a Boolean $d$-ary tuple, and if $2^d > |D|$ then a unique value in $D$ corresponds to the additional tuples in $\{0,1\}^d$. %

We will show an fpt reduction to QBFSAT parameterized by arity $q = r \cdot d$ and the number of existentially quantified variables. First, for an $n$-ary relation $R \in \Gamma$ we let $R_{\mathbb{B}} = \{(x^1_1, \ldots, x^1_d, \ldots, x^n_1, \ldots, x^n_d) \mid (h(x^1_1, \ldots, x^1_d), \ldots, h(x^n_1, \ldots, x^n_d))\}$ be the Boolean relation obtained by viewing each domain value in $D$ as a $d$-ary Boolean tuple via the surjective function $h$. Importantly, it is not hard to see that $R_{\mathbb{B}}$ can be defined by a conjunction of $(d \cdot \mathrm{ar}(R))$-clauses over Boolean variables $x_1, \ldots, x_{d \cdot \mathrm{ar}(R)}$: for each $(b_1, \ldots, b_{d \cdot \mathrm{ar}(R)}) \notin R_{\mathbb{B}}$ simply add the clause $(\neg x_1 \lor \ldots \lor \neg x_{d \cdot \mathrm{ar}(R)})$. Furthermore, we observe that any clause of arity smaller than $d \cdot r$ can be simulated by a $(d \cdot r)$-clause by repeating one of its arguments.

Now, let 
\[\Phi=Q_1  x_1,  \ldots Q_n x_n . \phi\]

where each $Q_i \in \{\forall, \exists\}$ be an instance of QCSP$(\Gamma)$. Crucially, at most $k$ variables are existentially quantified. For each $x_i$ we introduce $d$ fresh variables $x^1_i, \ldots, x^d_i$ and observe that this in total requires $n \cdot d$ fresh variables but that at most $k \cdot d$ of these correspond to existentially quantified variables. For each constraint $R(x_{i_1}, \ldots, x_{i_{\mathrm{ar}(R)}})$ occurring in $\phi$ we replace it by the conjunction of $(d \cdot \mathrm{ar}(R))$-clauses defining $R_{\mathbb{B}}(x^1_{i_1}, \ldots, x^1_{i_{\mathrm{ar}(R)}}, \ldots, x^d_{i_1}, \ldots, x^d_{i_{\mathrm{ar}(R)}})$. We let 

\begin{align*}
\varPhi &= Q_1 x^1_1, \ldots, x^d_1 \ldots Q_n x^1_n, \ldots, x^d_n . \\
& \phi_{\mathbb{B}}(x^1_1, \ldots, x^d_1, x^1_n, \ldots, x^d_n)
\end{align*}

be the instance of Q$(d \cdot r)$-CNF resulting from replacing each constraint by the corresponding conjunction of $(d \cdot r)$-clauses, where we with a slight abuse of notation write $Q_i x^1_i, \ldots, x^d_i$ with the meaning that all variables $x^1_i, \ldots, x^d_i$ have the same quantifier $Q_i$. For each variable domain $D_i \in D$ we first remark that $D_i$ can be treated as a unary relation and that $D_{i_{\mathbb{B}}}$ is thus a well-defined $d$-ary Boolean relation. Hence, for every $x_i$ we add the set of $d$-clauses defining 
$D_{i_{\mathbb{B}}}$ over the variables $x^i_1, \ldots, x^i_d$. Last, for every Boolean variable $x^j_i$ we simply use $\{0,1\}$ as the allowed domain values. We let $\varPhi'$ be the resulting Q$(d \cdot r)$-CNF instance.

For correctness, assume that $\Phi$ is true and has a winning strategy which for every existential variable $x_i$ is witnessed by a function $f_i \colon D^j  \to D$ where $j$ is the number of universally quantified variables preceding $x_i$. We construct a winning strategy for the Boolean instance $\varPhi'$ as follows. For each existential variable $x_i$ let $f_{i,\mathbb{B}}$ be the $(j \cdot d)$-ary Boolean function defined as $f_{i, \mathbb{B}}(h^{-1}(a_1), \ldots, h^{-1}(a_j)) = f_i(a_1, \ldots, a_j)$ for all $a_1, \ldots, a_j \in D$. We observe that this correctly defines a total Boolean function since $h$ is a bijection and it is easy to see that it must be a winning strategy for $\varPhi$. The other direction can be proven with a similar argument. 
\end{proof} 
\fi
It is worth remarking that by Samer \& Szeider~\cite{samer2010constraint} and the forthcoming Theorem~\ref{thm:AECNF-w1-hard} each of the above three conditions are necessary in the sense that we obtain a W[1]-hard problem if any condition is dropped.

\section{Lower Bounds} \label{sec:lower_bounds}

We proceed by complementing our positive FPT result by two strong lower bounds. We begin by ruling out an FPT algorithm for \FEQBF{} parameterized by the number of existential variables for unbounded clause size via a reduction from the W[1]-complete problem \textsc{Multipartite Independent set}. Using the following auxiliary result, we strengthen this result even further under the ETH.

\begin{theorem}[\cite{chen2006strong}, cf. Theorem~14.21~in~\cite{parameterizedalgorithmsbook}]
  \label{thm:ETH-IS}
  Assuming the ETH, there is no algorithm that
  decides if a graph on $n$ vertices
  has an independent set of size $k$
  in $f(k) \cdot n^{o(k)}$ time
  for any computable function $f$.
\end{theorem}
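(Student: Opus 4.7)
The plan is to prove this by reduction from 3-SAT under ETH, combined with the Sparsification Lemma of Impagliazzo, Paturi and Zane. The Sparsification Lemma lets us assume that the input 3-SAT formula has $n$ variables and $m = O(n)$ clauses, and under ETH such instances cannot be decided in $2^{o(n)}$ time.

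First I would build a multipartite conflict graph from a given sparse 3-SAT instance, parametrised by a target $k$ to be chosen later. Partition the $m$ clauses into $k$ groups $\cC_1,\dotsc,\cC_k$ of size at most $\lceil m/k \rceil$ each; each group touches at most $3m/k$ variables. For every group $\cC_i$ and every partial assignment $\alpha$ to the variables of $\cC_i$ that satisfies every clause of $\cC_i$, introduce one vertex $v_{i,\alpha}$ and put it in a part $V_i$. Make each $V_i$ into a clique (so any independent set picks at most one vertex per part), and place an edge between $v_{i,\alpha}$ and $v_{j,\beta}$ with $i \neq j$ whenever $\alpha$ and $\beta$ disagree on some shared variable. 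By construction, an independent set of size exactly $k$ must pick one $v_{i,\alpha_i}$ per part with the $\alpha_i$ pairwise consistent, which corresponds bijectively to a globally satisfying assignment of the original formula.

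Next I would tune the parameters. The constructed graph has $N = O(k \cdot 2^{3m/k})$ vertices and target independent set size $k$. Suppose for contradiction that \textsc{Independent Set} admits an $f(k) \cdot n^{o(k)}$ algorithm for some computable $f$, with $o(k) = g(k) \cdot k$ for some $g$ with $g(k) \to 0$. Running this algorithm on our instance yields a decision procedure for sparse 3-SAT running in time
\[
f(k) \cdot N^{g(k) \cdot k} \;=\; f(k) \cdot \bigl(k \cdot 2^{3m/k}\bigr)^{g(k) \cdot k} \;=\; f(k) \cdot k^{g(k) \cdot k} \cdot 2^{O(g(k) \cdot m)}.
\]
Since $m = O(n)$ and we are free to choose $k$ as a function of $n$, selecting $k(n)$ as a sufficiently slowly growing unbounded function (so that $g(k(n)) \to 0$ fast enough and $f(k(n)) \cdot k(n)^{g(k(n)) \cdot k(n)} = 2^{o(n)}$) produces an overall $2^{o(n)}$ algorithm for sparse 3-SAT, contradicting ETH.

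The main obstacle is the quantification over \emph{all} computable functions hidden inside $o(k)$: the choice of $k = k(n)$ must be made diagonally so that, no matter which $g$ with $g(k) \to 0$ governs the hypothetical algorithm, the resulting running time for 3-SAT is subexponential in $n$. This is the standard careful diagonalisation in the Chen--Huang--Kanj--Xia argument. A secondary, more routine, point is verifying that the partite structure of the conflict graph can be enforced inside an ordinary (uncoloured) \textsc{Independent Set} instance, which is taken care of by turning each $V_i$ into a clique so that any independent set of size $k$ automatically has exactly one vertex per part.
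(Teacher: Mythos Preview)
The paper does not supply its own proof of this theorem: it is quoted as a black-box result, attributed to Chen, Huang, Kanj and Xia and to Theorem~14.21 of the Cygan et~al.\ textbook, and is then immediately used to derive Corollary~\ref{cor:k-Partite-IS-ETH}. So there is nothing in the paper to compare your argument against.

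As for the proposal itself, it is essentially the standard Chen--Huang--Kanj--Xia argument and is sound: sparsify 3-SAT to $m=O(n)$ clauses, split the clauses into $k$ blocks, enumerate satisfying partial assignments per block, and encode pairwise compatibility as a $k$-partite graph whose size-$k$ independent sets correspond to global satisfying assignments. One wording issue worth fixing: the diagonalisation is not over all functions $g$ simultaneously. Once you assume a concrete hypothetical algorithm with running time $f(k)\cdot n^{g(k)}$ where $g(k)=o(k)$, both $f$ and $g$ are fixed, and you then pick $k=k(n)$ growing slowly enough that $f(k(n))\cdot k(n)^{g(k(n))k(n)}=2^{o(n)}$ and $g(k(n))\to 0$; this is possible precisely because $f$ is fixed and $g(k)/k\to 0$. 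Your sentence ``no matter which $g$ \dots\ governs the hypothetical algorithm'' is misleading in this respect, but the underlying plan is correct.
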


For our purposes, it is more convenient to work with
the following variant of the \textsc{Independent Set} problem.
In \textsc{Multipartite Independent Set},
an instance is a graph with the vertex set partitioned into $k$ parts,
and the question is whether the graph contains
an independent set with one vertex from each part.
Using a well-known 
reduction~(cf. Section~13.2~in~\cite{parameterizedalgorithmsbook}) that
takes an instance $(G, k)$ of \textsc{Independent Set} and constructs in
polynomial time an equivalent instance of \textsc{Multipartite Independent Set}
with $|V(G)|k$ vertices and the same parameter $k$,
we obtain the following corollary.

\ifshort
\begin{corollary}[$\star$]
\fi
\iflong
\begin{corollary}
\fi
  \label{cor:k-Partite-IS-ETH}
  Assuming the ETH, there is no algorithm that
  solves \textsc{Multipartite Independent Set} in
  $f(k) \cdot n^{o(k)}$ time
  for any computable function $f$.
\end{corollary}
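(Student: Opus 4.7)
The plan is to derive the corollary as a direct consequence of \Cref{thm:ETH-IS} via the standard polynomial-time reduction from \textsc{Independent Set} to \textsc{Multipartite Independent Set}. First I would describe the reduction explicitly: given an instance $(G,k)$ of \textsc{Independent Set} on $n$ vertices, construct a graph $H$ whose vertex set consists of $k$ disjoint copies $V_1,\dots,V_k$ of $V(G)$. For each pair $i\neq j$ and each pair of vertices $u\in V_i$, $v\in V_j$ (with $u,v$ denoting copies of $u',v'\in V(G)$), add the edge $\{u,v\}$ to $H$ whenever $u'=v'$ or $\{u',v'\}\in E(G)$. Then $H$ has $nk$ vertices and can be built in polynomial time.

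Next I would verify the correctness of the reduction in both directions. If $G$ has an independent set $\{v_1,\dots,v_k\}$ of size $k$, placing the copy of $v_i$ that lies in $V_i$ into a set $S$ yields a selection with one vertex per part that is independent in $H$ (no two chosen vertices represent the same original vertex, and no two represent adjacent vertices). Conversely, if $H$ has a multipartite independent set $S=\{s_1,\dots,s_k\}$ with $s_i\in V_i$, then the underlying originals are pairwise distinct (otherwise a same-vertex edge would be present) and pairwise non-adjacent in $G$ (otherwise a cross-edge would be present), so they form an independent set of size $k$ in $G$.

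Finally I would carry out the running-time contradiction. Suppose there were an algorithm solving \textsc{Multipartite Independent Set} in time $f(k)\cdot N^{o(k)}$ on $N$-vertex inputs, for some computable $f$. Composing with the reduction above gives an algorithm for \textsc{Independent Set} running in time
\[
\mathrm{poly}(n)+f(k)\cdot (nk)^{o(k)}=f(k)\cdot k^{o(k)}\cdot n^{o(k)}=g(k)\cdot n^{o(k)},
\]
where $g(k)=f(k)\cdot k^{o(k)}$ is computable. This contradicts \Cref{thm:ETH-IS}, so no such algorithm exists under the ETH.

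There is no substantive obstacle in this argument; the only routine care needed is to check that the blow-up in the vertex count from $n$ to $nk$ is absorbed by the $n^{o(k)}$ bound (which it is, since $(nk)^{o(k)}=k^{o(k)}\cdot n^{o(k)}$ and the $k^{o(k)}$ factor is a computable function of $k$ alone), and that the parameter $k$ is preserved exactly by the reduction so that the forbidden running-time form transfers cleanly between the two problems.
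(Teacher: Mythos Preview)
Your proposal is correct and follows essentially the same approach as the paper: the same $k$-copy construction, the same correctness argument, and the same absorption of the $(nk)^{o(k)}$ blow-up into $g(k)\cdot n^{o(k)}$. If anything, your version is slightly more careful, since you explicitly add the cross-part edges between copies of the \emph{same} vertex (needed to prevent selecting the same original vertex in several parts), a detail the paper's sketch omits while deferring to the textbook reference for the full argument.
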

\iflong
\begin{proof}
  We provide a short proof for completeness.
  Let $G$ be a graph with vertices
  $v_1, \dots, v_n$.
  Create a $k$-partite graph $G'$
  with vertices $V_1 \uplus \dots V_k$,
  where $V_i = \{(i,j) : j \in [n]\}$;
  add edges
  $\{ (i,j), (i',j') \}$ to $E(G')$
  for all $i \neq i'$ and 
  $\{ v_j, v_{j'} \}$ in $E(G)$.
  It is easy to see that
  $G$ contains an independent set
  of size $k$ if and only if
  $G'$ contains an independent set
  with one vertex from each part $V_i$,
  i.e. $(G',k)$ is a yes-instance of
  \textsc{Multipartite Independent Set}
  (see Section~13.2~in~\cite{parameterizedalgorithmsbook} for a full proof).

  Now, suppose there is an algorithm that solves \textsc{Multipartite Independent Set} in time
  $f(k) \cdot |V(G')|^{o(k)}$.
  Since $|V(G')| = nk$, we can use it 
  to decide whether $G$ has
  an independent set of size $k$
  in $f(k) \cdot (nk)^{o(k)} = (f(k) k^{o(k)}) \cdot n^{o(k)}$ time
  plus the polynomial time of
  the reduction,
  which contradicts the ETH
  by Theorem~\ref{thm:ETH-IS}.
\end{proof}

\fi

\begin{lemma}
  \label{lem:k-Partite-IS-to-AECNF}
  There is a polynomial-time reduction that takes
  an instance $(G, k)$ of \textsc{Multipartite Independent Set}
  and produces in polynomial time
  a $\forall \exists \text{CNF}$ formula
  with $\ceil{\log_2 (k)}$ existential variables
  such that $(G, k)$ is a yes-instance
  if and only if the formula is false.
\end{lemma}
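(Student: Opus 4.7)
The plan is to use a one-hot encoding of the vertex selection via universal variables combined with a binary encoding of a part index via the $\ell = \lceil \log_2 k \rceil$ existential variables. First I pad the instance by adjoining $2^{\ell} - k$ new parts, each containing a single isolated vertex. This does not change $\ell$ and preserves the existence of a multipartite independent set, since any multipartite IS in the original graph extends to one in the padded graph by adding the isolated dummy vertex from each new part, and conversely any multipartite IS in the padded graph restricts to one in the original. Hence I may assume $k = 2^{\ell}$.

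I introduce a universal variable $x_v$ for every $v \in V(G)$ (intended reading: ``$v$ is selected''), together with existential variables $e_1, \ldots, e_\ell$ whose bits encode an index $i \in \{0,1,\ldots,2^\ell - 1\}=[k]$. For each vertex $v$, writing $f(v)$ for the index of the part containing $v$ and $\mathrm{NEQ}(e,i)$ for the length-$\ell$ clause expressing $e \neq \mathrm{bin}(i)$ (one literal per bit), I add the clause
\[
K_v \;=\; \mathrm{NEQ}(e, f(v)) \;\lor\; \neg x_v \;\lor\; \bigvee_{w \in N(v)} x_w.
\]
Intuitively $K_v$ says that if the existential block encodes $v$'s part and $v$ is selected, then $v$ must have a selected neighbour. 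The resulting $\forall\exists$CNF has $|V(G)|$ clauses of total polynomial size, so the construction runs in polynomial time.

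For correctness observe that the formula is false iff some universal assignment $u$ leaves the CNF unsatisfiable over $e$. If a multipartite IS $\{s_1,\ldots,s_k\}$ exists, set $x_{s_i}=1$ and all other $x_v=0$; then for each $e$ encoding some $i\in[k]$ the clause $K_{s_i}$ reduces to $\bigvee_{w\in N(s_i)} x_w$, which is falsified since no neighbour of $s_i$ is selected in an IS. So $\phi(u,\cdot)$ is UNSAT and the formula is false. Conversely, assuming no multipartite IS exists, I will show every $u$ admits a satisfying $e$. If some part $V_i$ is unselected, then $e=\mathrm{bin}(i)$ works: every $K_v$ with $v\in V_i$ is satisfied via $\neg x_v$, and every $K_v$ with $v\notin V_i$ via $\mathrm{NEQ}(e,f(v))$. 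Otherwise every part has at least one selected vertex, and I claim there exists $i\in[k]$ such that every selected vertex in $V_i$ has a selected neighbour; picking $e=\mathrm{bin}(i)$ then satisfies the whole formula, as clauses $K_v$ with $v\notin V_i$ are handled by $\mathrm{NEQ}$, unselected $v\in V_i$ by $\neg x_v$, and selected $v \in V_i$ by the selected-neighbour disjunction.

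The main obstacle is establishing this claim. Assume for contradiction that for every $i$ some selected $v_i\in V_i$ has no selected neighbour. Then $\{v_1,\ldots,v_k\}$ contains exactly one vertex per part, and for any $i\neq j$ the vertex $v_j$ is selected while $v_i$ has no selected neighbour, forcing $v_j\notin N(v_i)$. Thus $\{v_1,\ldots,v_k\}$ is pairwise non-adjacent with one vertex per part, i.e.\ a multipartite IS, contradicting the hypothesis. This completes the proof.
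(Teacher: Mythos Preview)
Your proof is correct and takes essentially the same approach as the paper: the same padding to a power of two, the same one-universal-variable-per-vertex plus $\lceil\log_2 k\rceil$ existential ``part index'' bits, and the same clause shape $\mathrm{NEQ}(e,f(v))\lor(\text{literal for }v)\lor(\text{literals for neighbours of }v)$. The only cosmetic differences are that the paper uses the opposite polarity on the universal variables (their $y_v$ is your $\neg x_v$) and restricts the neighbour disjunct to neighbours outside $V_{f(v)}$, which is immaterial; your correctness argument is also the direct contrapositive of theirs, with your Case~2 claim playing the same role as their extraction of an independent set from a falsifying assignment.
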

\begin{proof}
  Let $G$ be a graph with vertex set $V_1 \uplus \dots \uplus V_k$.
  It will be convenient to assume that $k$ is a power of two.
  To this end, let $\kappa = \ceil{\log_2(k)}$, and add
  $2^\kappa - k$ new parts to $V(G)$, each consisting of one isolated vertex.
  Clearly, the new instance is equivalent to the original one,
  so we assume from now on that $k = 2^\kappa$.
  Enumerate vertices in each part of the graph.
  For convenience, we refer to vertex $j$ in part $V_i$ as $(i, j)$.

  We will construct a formula $\forall Y \exists X. \phi$
  on variables $Y = \{ y_v : v \in V(G) \}$ and $X = \{x_1, \dots, x_\kappa\}$
  that is false if and only if $(G, k)$ has an independent set
  with exactly one vertex from every part.
  To this end, enumerate all functions $\alpha_1, \dots, \alpha_{k}$
  from $X$ to $\{0,1\}$.
  For every vertex $v = (i,j) \in V(G')$,
  add a clause $C_{v}$ to $\phi$ with the following literals:
  \begin{itemize}
    \item $y_v$ and $\overline{y_u}$ for all $u \in V(G) \setminus V_i$
    such that $\{u,v\} \in E(G)$,
    \item $x_\ell$ if $\alpha_i(x_\ell) = 0$ and $\overline{x_\ell}$ if $\alpha_i(x_\ell) = 1$ for all $\ell \in [k]$.
  \end{itemize}
  This completes the construction.

  Towards correctness, first assume that
  $S$ is an independent set in $G$
  with one vertex from each $V_i$.
  Consider the set of clauses $\cC_S = \{C^{\forall}_v : v \in S\}$;
  recall that $C^\exists$ and $C^\forall$ for a clause $C$ denotes the
  restriction of $C$ to existential and universal variables, respectively.
  We claim that there is an assignment
  that falsifies every clause in $\cC_S$.
  It suffices to show that no pair of clauses
  in $\cC_S$ clashes, i.e. contain opposite literals.
  Consider two clauses $C^{\forall}_u, C^{\forall}_v \in \cC_S$.
  Since $S$ is an independent set, $u$ and $v$ are not adjacent,
  so $C^{\forall}_u$ does not contain $y_v$
  and $C^{\forall}_v$ does not contain $y_u$.
  Furthermore, both $C^{\forall}_u \setminus \{y_u\}$ and
  $C^{\forall}_v \setminus \{y_v\}$ only contain
  negative literals, so they do not clash either.
  Now, let $\tau : Y \to \{0,1\}$ be an assignment that
  falsifies all clauses in $\cC_S$.
  We claim that $\exists X. \phi[\tau]$ is false,
  i.e. $\phi[\tau]$ is not satisfiable.
  Indeed, for every assignment $\alpha_i : X \to \{0,1\}$,
  there is a vertex $v = (i, j) \in S$ in $G$, 
  and hence a clause $C^{\exists}_{v}$ remains in $\phi[\tau]$,
  which excludes $\alpha_i$ as the satisfying assignment.
  Thus, all assignments are excluded, and $\exists X. \phi[\tau]$ is false.

  For the other direction,
  suppose $\forall Y \exists X. \phi$ is false.
  Then there exists an assignment $\tau' : Y \to \{0,1\}$
  such that $\exists X. \phi[\tau']$ is false, i.e.
  $\phi[\tau']$ is not satisfiable.
  By construction, every clause of $\phi[\tau']$
  is a $\kappa$-clause with literals over all variables $x_1, \dots, x_\kappa$.
  Each such clause excludes exactly one satisfying assignment,
  so $\phi[\tau']$ contains exactly $2^\kappa = k$ clauses.
  Thus, for every assignment $\alpha_i : X \to \{0,1\}$,
  there exists a clause $C_{v}$ where $v \in V_i$ and
  $\tau'$ falsifies $C^{\forall}_{v}$.
  Pick one such clause $C_v$ for every $i$, and 
  let vertices $v$ form a set $S$.
  We claim that $S$ is an independent set in $G$.
  Suppose towards contradiction that $u,v \in S$ and $\{u,v\} \in E(G)$.
  By construction, $y_u \in C^{\forall}_u$ and $\overline{y_u} \in C^{\forall}_v$,
  so $\tau'$ cannot falsify both of them,
  which contradicts our choice of $u,v$.
\end{proof}

\begin{theorem} \label{thm:AECNF-w1-hard}
  \FEQBF{} is 
  W[1]-hard parameterized by the number of existential variables.
  Moreover, assuming the ETH, 
  this problem cannot be solved in $f(k) \cdot |\phi|^{o(2^k)}$ time for any computable function $f \colon \mathbb{N} \to \mathbb{N}$, where $\cQ. \phi$ is the $\forall \exists \text{CNF}$ with $k$
  existential variables.
\end{theorem}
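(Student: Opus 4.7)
The plan is to derive both conclusions from a single reduction, namely the one supplied by \Cref{lem:k-Partite-IS-to-AECNF}, combined with the known hardness of \textsc{Multipartite Independent Set}.

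For the W[1]-hardness part, I would rely on the fact that \textsc{Multipartite Independent Set} is W[1]-hard parameterized by the number $k$ of parts (inherited from \textsc{Independent Set} via the standard parameter-preserving FPT reduction recalled before \Cref{cor:k-Partite-IS-ETH}). \Cref{lem:k-Partite-IS-to-AECNF} then turns such an instance $(G, k)$ in polynomial time into an equivalent $\forall\exists$CNF formula with exactly $\lceil \log_2 k \rceil$ existential variables, a quantity that depends only on the original parameter. This is therefore a valid FPT reduction, and W[1]-hardness of \FEQBF{} parameterized by the number of existential variables follows immediately.

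For the ETH lower bound, I would proceed by contradiction. Suppose an algorithm $\mathcal{A}$ solves \FEQBF{} in $f(k') \cdot |\Phi|^{h(k')}$ time for some computable $f$ and some $h(k') = o(2^{k'})$, where $k'$ denotes the number of existential variables of the input formula $\Phi$. Given an instance $(G, k)$ of \textsc{Multipartite Independent Set} with $n = |V(G)|$ vertices, I would apply \Cref{lem:k-Partite-IS-to-AECNF} to obtain in polynomial time an equivalent formula $\Phi$ with $k' = \lceil \log_2 k \rceil$ existential variables and size $|\Phi| = \bigoh(n^2 \log k)$, and then run $\mathcal{A}$ on $\Phi$. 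Since $2^{k'} \leq 2k$, we have $h(k') = o(k)$, so $|\Phi|^{h(k')} \leq n^{o(k)} \cdot g(k)$ for some function $g$ (absorbing the $\log k$ factors). The total runtime is thus bounded by $f'(k) \cdot n^{o(k)}$ for some computable $f'$, contradicting \Cref{cor:k-Partite-IS-ETH}.

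The main obstacle is largely bookkeeping: I need to confirm that $|\Phi|$ is polynomial in $n$ with only $k$-dependent multiplicative factors, so that raising it to a $o(k)$ power collapses into $n^{o(k)}$ after moving the $k$-only contributions into the FPT factor. Inspecting the construction of \Cref{lem:k-Partite-IS-to-AECNF} makes this routine, since the formula has one clause per vertex of $G$ and each clause contains at most $n + \lceil \log_2 k \rceil$ literals, yielding the claimed size bound.
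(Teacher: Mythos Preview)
Your proposal is correct and follows essentially the same argument as the paper: both parts are derived from \Cref{lem:k-Partite-IS-to-AECNF} together with the W[1]-hardness and the ETH lower bound for \textsc{Multipartite Independent Set} (\Cref{cor:k-Partite-IS-ETH}). Your write-up is in fact more explicit than the paper's about the size of the produced formula and the arithmetic needed to turn $|\Phi|^{o(2^{k'})}$ into $n^{o(k)}$, but the underlying route is identical.
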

\begin{proof}
  W[1]-hardness is immediate from Lemma~\ref{lem:k-Partite-IS-to-AECNF}
  and the fact that \textsc{Multipartite Independent Set} is W[1]-hard.
  Moreover, if there is an algorithm deciding whether 
  a $\forall \exists \text{CNF}$ with matrix $\phi$ and 
  $\kappa$ existential variables is true or false 
  in $f(\kappa) \cdot |\phi|^{o(2^\kappa)}$ time for any computable function $f$, 
  then it can be combined with the reduction of Lemma~\ref{cor:k-Partite-IS-ETH}
  to solve \textsc{Multipartite Independent Set}
  in $f(\ceil{\log_2 k}) \cdot 2^{o(k)}$ time, contradicting the ETH.
\end{proof}

For lower bounds for \FEdQBF{3}, we first observe that this problem cannot be solved in $2^{o(k)} \cdot |\phi|^{\bigoh(1)}$ time under the ETH (since we have a trivial reduction from 3-SAT). However, we can significantly sharpen this under the SETH and in fact rule out every $2^{\bigoh(k)}$ time algorithm, i.e. the problem is not solvable in $c^k \cdot |\phi|^{\bigoh(1)}$ time for {\em any} constant $c \geq 1$.
We rely on the following result.

\begin{theorem}[\cite{CalabroIP13}]
  \label{thm:SETH-for-AE3SAT}
  Assuming the SETH, 
  \FEdQBF{3} is not solvable
  is $\bigoh^*(c^n)$ time\footnote{The notation
  $\bigoh^*$ hides factors polynomial in $n$.}
  for any $c < 2$.
\end{theorem}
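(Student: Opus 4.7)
The plan is to derive this SETH-based lower bound by a variable-preserving reduction from CNF-SAT to \FEdQBF{3}. Concretely, I would construct a polynomial-time reduction mapping every CNF-SAT instance with $N$ variables to an equivalent \FEdQBF{3} instance with $n = N + o(N)$ total variables. Given such a reduction, any $\bigoh^*(c^n)$-time algorithm for \FEdQBF{3} with $c<2$ composes with the reduction to yield an $\bigoh^*((c')^N)$-time algorithm for CNF-SAT for some $c'<2$, contradicting SETH. The additive $o(N)$ slack is essential, since a reduction with $n = \alpha N$ for some constant $\alpha > 1$ would only rule out $c < 2^{1/\alpha}$ rather than every $c < 2$.

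First, I would rewrite CNF-SAT unsatisfiability in $\forall\exists$-form: if $\phi = C_1 \wedge \cdots \wedge C_M$, then $\phi$ is unsatisfiable iff $\forall x_1,\dots,x_N .\, \bigvee_{i=1}^M \overline{C_i}(x)$, where $\overline{C_i}$ is the conjunction of negated literals of $C_i$. Introducing an existential ``clause selector'' $i$ of $\lceil \log_2 M \rceil$ bits, this becomes $\forall x \exists i .\, \overline{C_i}(x)$. The remaining task is to realize the predicate ``every literal of the $i$-th clause is falsified by $x$'' as a 3-CNF formula using few additional existential auxiliaries.

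The main obstacle lies in this last step: a tight 3-CNF encoding of the conditional predicate ``the $p$-th literal of clause $i$ is false in $x$'' using only $o(N)$ fresh variables. Naive encodings either blow up to $\Omega(M)$ auxiliaries (one family of indicator variables per clause index, to express ``$i = i_0$'') or introduce long clauses of size $\Omega(\log M)$. The key idea, which is the heart of the Calabro--Impagliazzo--Paturi argument, is to share a single pool of $O(w)$ auxiliary existentials across all universal assignments by letting them depend on $i$ (which is existentially quantified after the universal block), where $w$ is the maximum clause width. Combined with a sparsification-flavored preprocessing on SETH-hard input families to bound $M$ and $w$, this yields the count $n = N + O(\log M + w) = N + o(N)$. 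Correctness then follows by routine inspection: the existential ``prover'' picks the clause index $i$ falsified by $x$ and the auxiliary bits witness the falsification of each literal of $C_i$, while the 3-CNF structure ensures every constraint has size at most~3.
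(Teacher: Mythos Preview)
The paper does not prove this theorem at all: it is quoted from \cite{CalabroIP13} and used as a black box in the subsequent SETH lower bound for \FEdQBF{3} parameterized by $k$. There is therefore no ``paper's own proof'' to compare your attempt against.

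As for the merits of your sketch on its own: the overall strategy---reduce $d$-SAT (for growing $d$, after sparsification so that $M=O(N)$ and $w=d=O(1)$) to a \FEdQBF{3} instance on $N+o(N)$ variables by universally quantifying the original variables and existentially guessing a falsified clause index---is the right shape, and you are correct that anything weaker than an additive $o(N)$ blow-up would not suffice. However, the decisive step, namely realizing ``clause number $i$ is falsified by $x$'' as a 3-CNF using only $O(\log M + w)$ fresh existential variables, is asserted rather than carried out. The obvious encodings you yourself flag (per-clause indicator bits, or Tseitin-splitting the width-$(\log M{+}1)$ implications ``$i=j \Rightarrow \neg\ell$'') introduce $\Omega(M)$ auxiliaries; getting below that requires a genuinely non-trivial shared encoding, and your one-sentence description (``share a single pool of $O(w)$ auxiliary existentials \dots\ by letting them depend on $i$'') does not yet explain how the dependence of the active clause's literals on $i$ is expressed in width-$3$ clauses without reintroducing the blow-up. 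So the proposal is a reasonable plan that correctly locates the difficulty, but it stops exactly at the point where the real work of \cite{CalabroIP13} begins.
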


\begin{theorem}
  Assuming the SETH,
  \FEdQBF{3} parameterized by the number of existentially quantified variables $k$ is not solvable in
  $c^k \cdot |\phi|^{\bigoh(1)}$ time for any constant $c$.
\end{theorem}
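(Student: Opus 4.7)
The plan is to contradict Theorem~\ref{thm:SETH-for-AE3SAT} by combining a hypothetical $c^k \cdot |\phi|^{\bigoh(1)}$-time algorithm for \FEdQBF{3} with a classical nontrivial 3-SAT algorithm. Recall that 3-SAT can be solved unconditionally in $\bigoh^*(2^{\beta n})$ time for some constant $\beta < 1$ (e.g.\ Monien--Speckenmeyer gives $\beta = \log_2(1.618)$, and Hertli's analysis of PPSZ gives $\beta < 0.4$). For a $\forall\exists$ 3-CNF formula $\forall Y \exists X . \phi$ with $u = |Y|$ universal and $k = |X|$ existential variables, the formula is true iff $\phi[\tau]$ is a satisfiable 3-CNF on $X$ for every assignment $\tau$ of $Y$; hence, iterating over all $2^u$ assignments $\tau$ and applying the 3-SAT algorithm to each residual formula yields a decision procedure running in $2^u \cdot 2^{\beta k} \cdot |\phi|^{\bigoh(1)} = 2^{n - (1-\beta)k} \cdot |\phi|^{\bigoh(1)}$ time, where $n = u + k$. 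This realises the ``few universal variables'' idea mentioned in the introduction.

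Suppose for contradiction that \FEdQBF{3} admits an algorithm running in $c^k \cdot |\phi|^{\bigoh(1)} = 2^{\alpha k} \cdot |\phi|^{\bigoh(1)}$ time for some constant $c \geq 2$, with $\alpha = \log_2 c$ (the case $c < 2$ is immediate since $c^k \leq c^n$ would already contradict Theorem~\ref{thm:SETH-for-AE3SAT}). On an input with $n$ variables, I would run both the hypothetical algorithm and the brute-force-plus-3-SAT procedure in parallel and halt at the first to finish. The first running time $2^{\alpha k}$ is increasing in $k$, while the second, $2^{n-(1-\beta)k}$, is decreasing in $k$, so their pointwise minimum is maximised where they coincide, namely at $k^\ast = n/(\alpha + 1 - \beta)$, giving
\begin{equation*}
\min\!\bigl(2^{\alpha k},\, 2^{n - (1-\beta)k}\bigr) \;\leq\; 2^{\tfrac{\alpha}{\alpha + 1 - \beta}\,n}.
\end{equation*}
Since $\beta < 1$, the exponent $\alpha/(\alpha + 1 - \beta)$ is strictly smaller than $1$, so the combined procedure solves \FEdQBF{3} in $\bigoh^*(d^n)$ time for the constant $d = 2^{\alpha/(\alpha + 1 - \beta)} < 2$, contradicting Theorem~\ref{thm:SETH-for-AE3SAT} and hence the SETH.

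No step is genuinely difficult: the combinatorics is only the arithmetic balancing above, and the single external ingredient is the unconditional existence of a $2^{\beta n}$-time 3-SAT algorithm with $\beta < 1$, which is classical. The only subtlety is bookkeeping---checking that plugging a universal assignment into a 3-CNF and simplifying it contributes only polynomial overhead per branch, and that the algorithm can read $u$ and $k$ off the prefix so as to run both strategies ``in parallel''. The main obstacle was really conceptual rather than technical: noticing that, although $c^k$ alone yields no bound better than $c^n$ when $c \geq 2$, pairing the hypothetical algorithm with a standard 3-SAT solver at the threshold $k^\ast$ already suffices to cross below the SETH barrier.
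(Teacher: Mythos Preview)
Your proof is correct and essentially identical to the paper's: both combine the hypothetical $c^k$ algorithm with the ``enumerate universals, then run a sub-$2^k$ 3-SAT solver'' algorithm, balance the two at the crossover $k^\ast = n/(\alpha+1-\beta)$ (the paper writes this as $\delta = 1/T$ with $T = 1+\log_2 c - \log_2 c_3$), and obtain a $2^{(\alpha/(\alpha+1-\beta))n}$-time procedure contradicting Theorem~\ref{thm:SETH-for-AE3SAT}. The only cosmetic difference is that the paper phrases the choice as a case split on $\delta = k/n$ rather than running both in parallel, and uses the infimum $c_3$ rather than a named 3-SAT algorithm; neither affects the argument.
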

\begin{proof} 
Let $k$ and $\ell$ denote the number of existential and universal variables in an input formula, respectively, and let $n = k + \ell$.
We consider two algorithms for \FEdQBF{3}.
The first one is the hypothetical FPT algorithm that solves \FEdQBF{3} in $\bigoh^*(c^k)$ time for constant $c$.
The second algorithm enumerates assignments to universal variables, and solves the resulting 3-SAT formulas.
The latter requires $\bigoh(2^\ell {c_3}^k)$ time,
where $c_3$ the infimum of all $b$ such that 3-SAT is solvable in $\bigoh^*(b^n)$ time.
Observe that $c_3 < 2$.
Now, let $\delta = k / n$ be the proportion of existential variables in the input formula.
We claim that, depending on the value of $\delta$, we can use
either the first or the second algorithm to solve \FEdQBF{3} in $\bigoh^*(d^n)$ time for some $d < 2$, contradicting SETH by Theorem~\ref{thm:SETH-for-AE3SAT}.
To this end, define $T = 1+\log_2(c)-\log_2(c_3)$, and observe that $T \geq 1$ since \FEdQBF{3} is more general than 3-SAT and $c \geq c_3$.

First, suppose $\delta > 1/T$, i.e. $k > n/T$.
Then use the first algorithm, which runs in $\bigoh^*(c^{n/T}) = \bigoh^*(2^{(\log_2(c)/T)n})$ time. 
Now suppose $\delta \leq 1/T$, i.e. $k \leq n / T$.
Then use the second algorithm, which runs in $\bigoh^*(2^{(1-\delta)n} {c_3}^{\delta n})$ time.
Observe that $(1 - \delta) + \log_2(c_3) \delta \leq 1 - (1 - \log_2(c_3)) / T \leq \log_2(c) / T$.
In both cases, our algorithm runs in $\bigoh^*(2^{(\log_2(c)/T) n})$ time.
Since $c_3 < 2$, we have $\log_2(c_3) < 1$ and $\log_2(c) < T$, which completes the proof.
\end{proof}

\section{Discussion} \label{sec:discussion}

In this paper we investigated a simple and overlooked parameter for QBFSAT and proved FPT with respect to the number of existentially quantified variables and the maximum arity of any clause. This parameterization is particularly noteworthy since applied QBF solving is frequently based on the idea of expanding universally quantified variables in order to get an instance that can be solved with SAT techniques. This strategy comes with the downside that (1) after removing universally quantified variables one still needs to solve an NP-hard problem, and (2) the strategy is inefficient for instances with many universally quantified variables. Our result is complementary in the sense that instances with many universal but few existential variables can now be handled efficiently with our novel FPT algorithm. While we in this paper concentrate on the theory it is natural to speculate whether these two approaches can be merged in actual QBF solvers to solve previously intractable instances faster.

For improvements, there is a gap between our $2^{\bigoh(k2^k)}$ FPT algorithm and our lower bound (under the SETH) which rules out any single-exponential $2^{\bigoh(k)}$ algorithm. It is not immediately which direction could be strengthened and new algorithmic ideas would likely be needed to bring down the running time to $2^{\bigoh(k^c)}$ for some fixed $c$. 
It would also be interesting to generalize our FPT algorithm to even broader classes of problems. A promising candidate is the {\em depedency quantified Boolean formula} (DQBF) formalism, i.e. Boolean formulas equipped with Henkin quantifiers. This problem is generally NEXPTIME-complete and has comparably few FPT results. Naturally, Lemma~\ref{lem:qbftocgis} would need to be modified to the DQBF setting, but besides that the main ideas should carry over.

From a purely theoretical perspective rather little is known about the logical fragment where we allow unrestricted universal quantification but only limited existential quantification. As a starting point one could define a closure operator on sets of Boolean relations induced by logical formulas allowing universal but no existential quantification over conjunctions of atoms from a predetermined structure. Such formulas would generalize {\em quantifier-free primitive positive} definitions (qfpp-definitions) which has been used to study fine-grained complexity aspects of CSPs~\cite{DBLP:journals/toct/LagerkvistW22}, but be more restrictive than the formulas considered by B\"orner et al.~\cite{borner2003} developed to study classical complexity of QCSPs. Could it, for example, be possible to give a classification akin to Post's classification of Boolean clones, or find a reasonable notion of algebraic invariance? A reasonable guess extending B\"orner et al.~\cite{borner2003} would be to consider algebras consisting of partial, surjective polymorphisms.

\section*{Acknowledgements}

The second author is partially supported by the Swedish research council under grant VR-2022-03214.
The fourth author was supported by
the Wallenberg AI, Autonomous Systems and Software Program (WASP) funded
by the Knut and Alice Wallenberg Foundation.

\bibliographystyle{abbrv}
\bibliography{references}

\end{document}